\documentclass[lettersize,journal]{IEEEtran}
\usepackage{amsfonts}
\usepackage{graphicx}
\usepackage{amsmath, amssymb}
\usepackage{xcolor}  
\usepackage{bm}
\usepackage{algorithmic}
\usepackage{algorithm}
\usepackage{array}
\usepackage[caption=false,font=normalsize,labelfont=sf,textfont=sf]{subfig}
\usepackage{textcomp}
\usepackage{stfloats}
\usepackage{url}
\usepackage{verbatim}
\usepackage{graphicx}
\usepackage{cite}
\usepackage{amsthm}
\usepackage{array} 
\usepackage{arydshln}
\DeclareMathOperator*{\argmax}{argmax}
\DeclareMathOperator*{\argmin}{argmin}
\usepackage[caption=false]{subfig}

\newtheorem{remark}{\textbf{Remark}}

\newtheorem{proposition}{\textbf{Proposition}}

\begin{document}

\title{Pinching Antennas-Aided Integrated Sensing \\ and Multicast Communication Systems}

\author{Shan Shan, Chongjun Ouyang, Xiaohang Yang, Yong Li, Zhiqin Wang, and Yuanwei Liu
\thanks{Shan Shan and Yong Li are with the School of Information and Communication Engineering, Beijing University of Posts and Telecommunications, Beijing 100876, China (e-mail: \{shan.shan, liyong\}@bupt.edu.cn). 
Chongjun Ouyang is with the School of Electronic Engineering and Computer Science, Queen Mary University of London, London E1 4NS, U.K. (e-mail: c.ouyang@qmul.ac.uk). 
Xiaohang Yang and Zhiqin Wang are with China Academy of Information and Communications Technology, Beijing 100876, China (e-mail: \{yangxiaohang, wangzhiqin\}@caict.ac.cn).
Yuanwei Liu is with the Department of Electrical and Electronic Engineering, The University of Hong Kong, Hong Kong (e-mail: yuanwei@hku.hk).}
}
\IEEEaftertitletext{\vspace{-3em}}
\maketitle

\begin{abstract}
A pinching antennas (PAs)-aided integrated sensing and multicast communication framework is proposed. In this framework, the communication performance is measured by the multicast rate considering max-min fairness. Moreover, the sensing performance is quantified by the Bayesian Cram\'er-Rao bound (BCRB), where a Gauss-Hermite quadrature-based approach is proposed to compute the Bayesian Fisher information matrix. Based on these metrics, PA placement is optimized under three criteria: communications-centric (C-C), sensing-centric (S-C), and Pareto-optimal designs. These designs are investigated in two scenarios: the single-PA case and the multi-PA case. 1) For the single-PA case, a closed-form solution is derived for the location of the C-C transmit PA, while the S-C design yields optimal transmit and receive PA placements that are symmetric about the target location. Leveraging this geometric insight, the Pareto-optimal design is solved by enforcing this PA placement symmetry, thereby reducing the joint transmit and receive PA placement to the transmit PA optimization. 2) For the general multi-PA case, the PA placements constitute a highly non-convex optimization problem. To solve this, an element-wise alternating optimization-based method is proposed to sequentially optimize all PA placements for the S-C design, and is further incorporated into an augmented Lagrangian (AL) framework and a rate-profile formulation to solve the C-C and Pareto-optimal design problems, respectively. Numerical results show that: i) PASS substantially outperforms fixed-antenna baselines in both multicast rate and sensing accuracy; ii) the multicasting gain becomes more pronounced as the user density increases; and iii) the sensing accuracy improves with the number of deployed PAs.
\end{abstract}
\vspace{-10pt}
\begin{IEEEkeywords}
Bayesian Cram\'er-Rao bound (BCRB), integrated sensing and communications (ISAC), pinching-antenna system (PASS).
\end{IEEEkeywords}
\vspace{-10pt}
\section{Introduction}
Integrated sensing and communications (ISAC) combines dual communication and sensing functionalities over shared wireless resources to improve spectrum utilization and system efficiency~\cite{9755276, 9737357}. 
Multiple-input multiple-output (MIMO) technology has been widely regarded as an efficient approach to ISAC, as it leverages spatial diversity and beamforming to simultaneously enhance spectral efficiency and sensing accuracy~\cite{10403776, 10038611}.
However, conventional MIMO architectures remain constrained by their geometrically fixed antenna configurations, which limit their adaptability to dynamic propagation environments.
To overcome this limitation, flexible-antenna systems have been proposed as an emerging antenna paradigm that enhance the spatial adaptability of wireless channels. 
Representative implementations include movable-antenna and fluid-antenna architectures~\cite{10839251,10707252}. Movable antennas change their physical positions to modify link geometry, while fluid antennas reshape their electromagnetic aperture through conductive-fluid redistribution. By locally adapting the propagation path, they can mitigate small-scale fading and improve both communication reliability and sensing accuracy in ISAC systems.
In addition, reconfigurable intelligent surfaces (RISs)~\cite{10319318} have also been introduced to mitigate propagation blockages. By leveraging controllable phase shifts of numerous passive reflecting elements, RIS technology establishes reliable virtual links between the base station (BS) and sensing targets.
However, the effectiveness of movable-antenna and fluid-antenna is generally constrained by limited movement ranges, which restricts their capability in addressing large-scale path-loss or line-of-sight (LoS) blockage.
Meanwhile, RIS encounters a severe double path-loss effect, especially at higher operating frequencies, which substantially degrades its reflection efficiency.

Recently, the \emph{Pinching-Antenna SyStem} (PASS) has been experimentally demonstrated as a practical realization of flexible-antennas that addresses the aforementioned limitations~\cite{fukuda2022pinching, 10945421}. PASS employs a dielectric waveguide as the transmission medium with low in-waveguide propagation loss, and its aperture length spans from a few meters to tens of meters. Along the waveguide, small dielectric elements, termed \emph{pinching antennas} (PAs), can be dynamically attached or detached, from which radio waves are transmitted or received.
A key advantage of PASS for ISAC lies in its scalable waveguide structure, which can be extended to be arbitrarily long. From a communication perspective, this establishes \emph{``near-wired''} links with strong LoS conditions to individual users. This characteristic effectively mitigates large scale path-loss and avoid LoS blockage~\cite{11036558}. Simultaneously, the long waveguide synthesizes a large effective aperture for sensing, which induces dominant near-field effects that facilitate precise polar domain localization~\cite{jiang2025nearfield}. 
\vspace{-10pt}
\subsection{Related Works}
The above advantages have motivated several early investigations into PASS-enabled communications and sensing. 
In particular, the authors in~\cite{11148125} provided an information-theoretic characterization of the achievable rate region for PASS-aided ISAC systems, which revealed a fundamental tradeoff between communication and sensing rates. Extending this analysis, the studies in~\cite{11122551, 11197530, mao2025multi} investigated the integration of PASS into ISAC systems, where the received signal-to-noise ratio (SNR) at the sensing targets was adopted as the performance metric. 
To obtain a more rigorous measure, subsequent studies adopted the Cram\'er-Rao bound (CRB), which provides the theoretical lower bound on the estimation error variance of any unbiased estimator~\cite{10217169, liu2022cramer}. 
Specifically, the CRB achieved by PASS is first derived in~\cite{ding2025pinchingisac} and then compared to that of conventional antennas, while~\cite{11111701} investigated its minimization via a particle swarm optimization (PSO)-based algorithm. 
In parallel, works such as \cite{bozanis2025cramer, li2025pinching, khalili2025pinching} analyzed round-trip sensing configurations using uniform linear arrays (ULAs) for echo reception. From an architectural perspective, the segmented waveguide system (SWAN) in~\cite{jiang2025segmented} enhances the degrees of freedom (DoF) available for optimization and characterizes the Pareto fronts for sensing and communication performance. 
Building on these theoretical foundations, recent works have extended the framework to diverse scenarios. To adapt to high-mobility environments, the work in~\cite{11288076} proposed a PASS-enabled ISAC framework for unmanned aerial vehicles (UAVs) utilizing deep reinforcement learning, while the authors in~\cite{11303890} addressed security concerns by jointly optimizing PA locations and artificial noise. Furthermore, to improve transmission efficiency, the work in~\cite{11314615} introduced an index modulation (IM)-based framework, which employs variational inference for joint user localization and symbol detection.
Collectively, these studies establish the theoretical foundation for PASS-aided ISAC systems and demonstrate the validity and effectiveness of PASS-enabled architectures.

\vspace{-10pt}
\subsection{Motivations and Contributions}
Despite the significant progress achieved by these works, two critical issues remain unresolved in PASS-aided ISAC research.
First, most existing sensing designs adopt the CRB as the metric under the idealized assumption that the parameters to be sensed are deterministic and known. In practice, however, the target parameters are typically unknown and random. Fortunately, their prior probability density functions (PDFs) can be obtained based on target properties and historical data. In such cases, the Bayesian CRB (BCRB) provides a more appropriate lower bound on the sensing mean-squared error (MSE) by explicitly exploiting the prior distribution information~\cite{kay1993estimation, 10584278, 10639496}. 
Second, regarding the communication functionality, existing PASS-enabled ISAC studies largely focus on single-user or unicast transmission. 
However, multicasting naturally aligns with the hardware characteristics of PASS. Since the waveguide is connected to a single RF chain, the available spatial DoF for mitigating inter-user interference are inherently limited, which makes this architecture more favorable for multicast transmission. By delivering a common message, multicasting avoids user-specific streams and thus further alleviates the DoF limitation of PASS-enabled multiuser transmission. Nevertheless, this important PASS-enabled multicast ISAC scenario remains largely unexplored.

To fill these research gaps and to obtain a deeper understanding of the communications-sensing tradeoff in PASS-enabled ISAC systems, we propose a PASS-aided integrated sensing and multicast communication framework. The major contributions of this paper are summarized as follows:

\begin{itemize}
    \item We propose a PASS-aided ISAC framework where the PA placement is optimized to perform simultaneous information multicasting and target sensing. Within this framework, the multicast communication performance is quantified by the max-min fairness (MMF) rate, while the BCRB is adopted to characterize the relationship between the PA placement and sensing accuracy. Since the BCRB evaluation involves intractable integrals, the Gauss-Hermite quadrature (GHQ) rule is employed to facilitate the computation of the Fisher information matrix (FIM). Based on these metrics, the PA placement optimization is then investigated under three design criteria, namely: 1) communications-centric (C-C) design, which maximizes the multicast rate, 2) sensing-centric (S-C) design, which minimizes the BCRB, and 3) Pareto-optimal design, which characterizes the communications-sensing tradeoff.
    \item We start from a single-PA setup to reveal key design insights and to characterize the Pareto boundary. i) For the C-C design, we derive a closed-form solution in which the optimal transmit PA lies in a finite candidate set. ii) For the S-C design, we derive that under one-dimensional target location uncertainty, the optimal transmit and receive PA placements admit a symmetric layout. iii) For the Pareto-optimal design, we adopt a rate-profile-based formulation to characterize the tradeoff boundary. By exploiting the fact that the multicast rate is independent of the receive PA, the receive PA placement follows a closed-form sensing-driven symmetry rule, and the remaining transmit PA placement reduces to an efficient univariate optimization that can be solved via a simple one-dimensional search.
    \item 
    We further extend our study to the multi-PA setting, where a more rigorous design is considered by explicitly accounting for performance constraints beyond the objective. i) For the S-C design, we propose an element-wise alternating optimization (AO) method to sequentially optimize all PA placements so as to minimize the BCRB under a minimum multicast rate constraint. ii) For the C-C design, we develop a penalty-based adaptation to maximize the multicast rate subject to a maximum BCRB constraint. iii) For the Pareto-optimal design, we propose a rate-profile-based scalarization approach to characterize the tradeoff boundary and solve the resulting problem using the sequential element-wise AO algorithm.
    \item Finally, we provide numerical results to validate the effectiveness of the proposed algorithms. Our findings reveal that: i) PASS achieves substantial performance gains over conventional fixed-location antenna systems in both multicast communication and sensing performance. ii) For multicasting, the performance advantage of PASS becomes more pronounced in scenarios with high user density. iii) For sensing, the deployment of additional PAs yields increasingly precise target localization.
\end{itemize}

The reminder of this paper is organized as follows. Section~II presents the PASS-ISAC system model and performance metrics. Section~III and~IV optimize the PA placement in a simplified single-PA scenario and a more general multi-PA scenario, respectively. Section~V provides numerical results. Finally, Section~VI concludes the paper.

\subsubsection*{Notations}
Scalars, vectors, and matrices are represented by regular, bold lowercase, and bold uppercase letters, respectively. The sets of complex and real numbers are denoted by $\mathbb{C}$ and $\mathbb{R}$. The operators $(\cdot)^{-1}$, $(\cdot)^{\ast}$, $(\cdot)^{\mathsf T}$, $(\cdot)^{\mathsf H}$, and ${\mathsf {tr}}(\cdot)$ correspond to the inverse, conjugate, transpose, conjugate transpose, and trace, respectively. For a vector ${\bf x}$, $[{\bf x}]_i$ represents its $i$th element, while ${\mathsf{Diag}}({\bf x})$ constructs a diagonal matrix using the elements of ${\bf x}$. The Kronecker delta is denoted by $\delta(x,y)$, which equals $1$ if $x = y$ and $0$ otherwise. The operator $[x]^+ \triangleq \max\{x, 0\}$ represents the non-negative projection of $x$. ${\mathcal C}{\mathcal N}(a, b^2)$ signifies a circularly symmetric complex Gaussian distribution with mean $a$ and variance $b^2$. The statistical expectation is given by ${\mathbb{E}}\{\cdot\}$. Furthermore, $|\cdot|$ and $\|\cdot\|$ denote the absolute value and Euclidean norm, respectively. Finally, $\Re \{\cdot\}$ extracts the real part of a complex number, and ${\mathcal O}(\cdot)$ denotes the big-O notation.

 \begin{figure}[!t]
\centering
\includegraphics[height=0.23\textwidth]{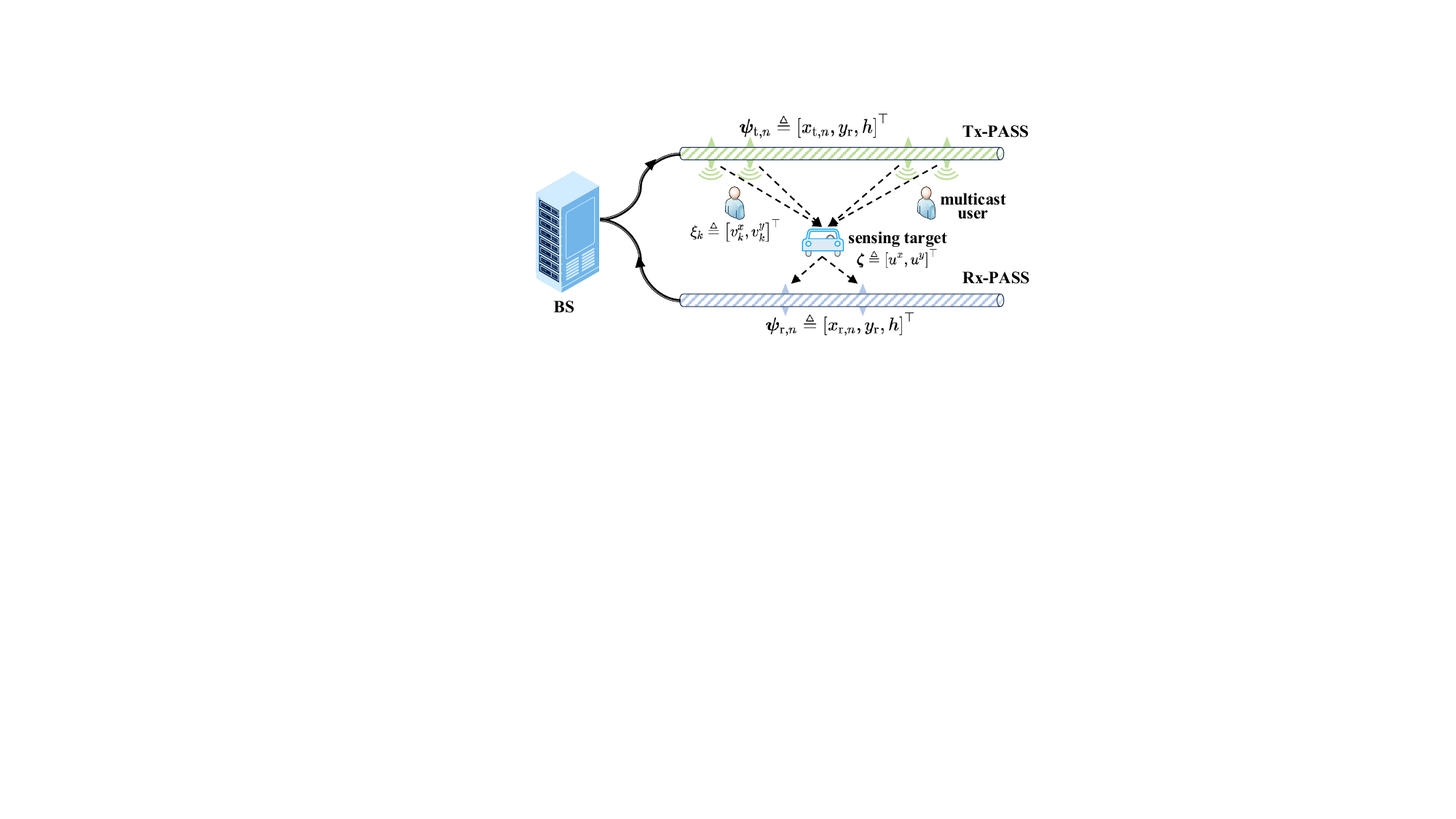}
\caption{Illustration of PA-aided integrated sensing and multicast transmission.}
\label{Fig_1}
\end{figure}
\vspace{-10pt}
\section{System Model}
We consider a PASS-aided ISAC system that simultaneously performs information multicasting and sensing. The dual-functional BS is equipped with a transmit waveguide (\emph{Tx-PASS}) employing $N_{\rm t}$ Tx-PAs and a receive waveguide (\emph{Rx-PASS}) employing $N_{\rm r}$ Rx-PAs. Both waveguides and their associated PAs are deployed at a fixed height $h$ above the ground, within a rectangular coverage area of size $D_{\rm x} \times D_{\rm y}$. For practical deployment, the Tx-PASS and Rx-PASS are aligned parallel to the $x$-axis. The feed points of the Tx-PASS and Rx-PASS are located at ${\bm\psi}_{{\rm t},0} \triangleq \left[0, y_{\rm t}, h\right]^{\mathsf T}$ and ${\bm\psi}_{{\rm r},0} \triangleq \left[0, y_{\rm r}, h\right]^{\mathsf T}$, respectively, where $y_{\rm t}$ and $y_{\rm r}$ denote their positions along the $y$-axis. Define ${\mathcal N}_{\rm t} \triangleq \{1, \dots, N_{\rm t}\}$ and ${\mathcal N}_{\rm r} \triangleq \{1, \dots, N_{\rm r}\}$ as the index sets of Tx-PAs and Rx-PAs, respectively. The location of the $n$th Tx-PAs and Rx-PAs are denoted by ${\bm\psi}_{{\rm t},n} \triangleq \left[x_{{\rm t},n}, y_{\rm t}, h\right]^{\mathsf T}, \forall n\in{\mathcal N}_{\rm t}$ and ${\bm\psi}_{{\rm r},n} \triangleq \left[x_{{\rm r},n}, y_{\rm r}, h\right]^{\mathsf T},  \forall n\in{\mathcal N}_{\rm r}$, respectively.
Let ${\bf x}_{\rm t} \triangleq [x_{{\rm t},1}, x_{{\rm t},2}, \dots, x_{{\rm t},N_{\rm t}}]^{\mathsf T}$ collect the $x$-coordinate of all Tx-PAs. These coordinates satisfy the spatial constraint $0 \le x_{{\rm t},1} < \dots < x_{{\rm t},N_{\rm t}} \le D_{\rm x}$ and a minimum inter-PA spacing constraint $|x_{{\rm t},n} - x_{{\rm t},n-1}| \ge \Delta_{\min} = \frac{\lambda}{2}, \ \forall n \in {\mathcal N}_{\rm t},~n \ge 2$, where $\lambda$ is the free-space wavelength~\cite{10981775}. The $x$-coordinates of the Rx-PAs follow the same constraints. 

The considered ISAC system simultaneously serves $K$ single-antenna user and senses a single target, as illustrated in Fig.~\ref{Fig_1}. Let $\mathcal{K}\triangleq\{1,2,\ldots,K\}$ denote the index sets of the users.  The position of the $k$th user is denoted by $\boldsymbol\xi_k\triangleq[v_k^x,v_k^y, 0]^{\mathsf T}$, whereas the target is assumed to be located at $\boldsymbol\zeta\triangleq[u^x,u^y, 0]^{\mathsf T}$.
The BS simultaneously transmits common information signals to all users and utilizes the reflected echo signals for sensing. Moreover, we focus on the fundamental sensing process and neglect the potential interference of downlink communication signals on the received echoes, which assumes that such coupling effects can be effectively mitigated through standard self-interference cancellation and synchronization techniques~\cite{9737357}.
Let ${s}\in\mathbb{C}$ denote the ISAC transmit signal emitted by the BS. The BS is subject to a maximum transmit power budget of $P_{\rm t}$, where $P_{\rm t}\neq 1$. To achieve the multicast capacity limit under this power constraint, the transmitted signal is ${s} = \sqrt{P_{\rm t}}s_{\rm o}$, where $s_o\sim{\mathcal{CN}}(0,1)$ denotes the normalized information symbol.
\subsection{Multicast Communication Performance Metric}
We first consider the multicast communication channel. Let ${\bf h}({\boldsymbol{\xi}}_{k}, {\bf x}_{{\rm t}})\in\mathbb{C}^{N_{\rm t}\times1}$ denote the channel vector from the BS to user $k\in\mathcal{K}$. The signal received by the $k$th user is
\begin{align}
{y}_{k} = {\bf h}^{\mathsf T}({{\boldsymbol{\xi}}_k}, {\bf x}_{{\rm t}}){\boldsymbol{\phi}}_{\rm t}({\bf x}_{\rm t}){s} + {z}_{k}, \quad \forall k\in {\mathcal K}, \label{eq:Received com signal}
\end{align}
where ${z}_{k}\sim\mathcal{CN}(0,\sigma_{k}^2)$ represents the additive noise at the $k$th user with noise variable $\sigma_k^{2}$.
The $n$th element of ${\bf h}({{\boldsymbol{\xi}}_k},{\bf x}_{{\rm t}})$ represents the free-space channel coefficient between the $n$th Tx-PA and the $k$th user, which can be expressed as follows:
\begin{align}\label{eq:element_h}
\left[{\bf h}({{\boldsymbol{\xi}}_k}, {\bf x}_{{\rm t}})\right]_n\triangleq\frac{\sqrt{\eta}{\rm{e}}^{{\rm{-j}}k_0\lVert{{\boldsymbol{\xi}}_k}-{\bm\psi}_{{\rm t},n}\rVert}}{\lVert{{\boldsymbol{\xi}}_k}-{\bm\psi}_{{\rm t},n}\rVert}.
\end{align}Here, $k_0=\frac{2\pi}{\lambda}$ is the wavenumber, and $\eta\triangleq\frac{c^2}{16\pi^2f_{\rm{c}}^2}$ with $c$ and $f_{\rm{c}}$ denoting the speed of light and the carrier frequency, respectively. Moreover, the $n$th element of ${\boldsymbol{\phi}}_{\rm t}({\bf x}_{\rm t})$ represents the normalized in-waveguide propagation coefficient between the Tx-PAs and the feed point of the Tx-PASS, which can be written as follows: 
\begin{align}\label{eq:phi_t}
	\left[{\boldsymbol{\phi}}_{\rm t}({\bf x}_{\rm t})\right]_n = \sqrt{\frac{1}{N_{\rm t}}}{\rm exp}\left({\rm -j}k_{\rm g}\lVert{\bm\psi}_{{\rm t},n} - {\bm\psi}_{{\rm t},0}\rVert\right),
\end{align}
where $k_{\rm g} = \frac{2\pi}{\lambda_{\rm g}}$ is the in-waveguide wavenumber with $\lambda_{\rm g}$ denoting the in-waveguide wavelength.
Based on the received signal model in \eqref{eq:Received com signal}, the received SNR at the $k$th user can be written as follows:
\begin{align}\label{eq:SNR for com}
\gamma_k(x_{\rm t})
&\!\triangleq\! \mathbb{E}\!\left(
  \frac{|\mathbf{h}_k^{\mathsf T}(\boldsymbol{\xi}_k,x_{\rm t})\boldsymbol{\phi}_{\rm t}(x_{\rm t})s|^2}{|z_k|^2}
  \right)
\!\!=\! \frac{P_{\rm t}|\mathbf{h}_k^{\mathsf T}(\boldsymbol{\xi}_k,x_{\rm t})\boldsymbol{\phi}_{\rm t}(x_{\rm t})|^2}{\sigma_k^2}.
\end{align}

Accordingly, the achievable MMF rate with a given transmit PA location vector ${\bf x}_{\rm t}$ is determined by the minimum SNR among all multicast users, which can be given by~\cite{9093950}
\begin{align}
{\mathsf R}_{\rm c}({\bf x}_{\rm t}) = \underset{k\in\mathcal{K}}{\mathsf{min}}\left\{\log_{2}\Big(1+\gamma_{k}({\bf x}_{\rm t})\Big)\right\}.\label{eq:rate}
\end{align}
\vspace{-15pt}
\subsection{Sensing Performance Metric}
Next, we consider the PASS-based sensing model, in which the multicast signals transmitted by the Tx-PASS are reflected by the target and subsequently collected by the Rx-PASS for sensing the location of the target. Specifically, the received echo signal is modeled as follows:
\begin{align}
{\bf y}_{{\rm s}} &= \beta_{\rm s}\,{\bf h}({{\boldsymbol{\zeta}}}, {\bf x}_{{\rm r}}){\bf h}^{\mathsf T}({{\boldsymbol{\zeta}}}, {\bf x}_{{\rm t}}){\boldsymbol{\phi}}_{\rm t}({\bf x}_{\rm t}){s} + {\bf z}_{{\rm s}},
\end{align}
where ${\bf z}_{{\rm s}}\sim\mathcal{CN}({\bf 0}, \sigma_{\rm s}^{2}{\bf I}_{N_{\rm r}})$ represents the additive CSCG noise vector, and $\beta_{\rm s}$ denotes the complex target scattering coefficient associated with the radar cross-section (RCS). Here, we treat $\beta_{\rm s}$ as a deterministic constant within one coherent processing interval and absorb it into a normalized gain factor, e.g., $\beta_{\rm s}=1$ without loss of generality~\cite{Hua2024NF3DCRB}, so that the sensing performance is dominated by the geometric dependence on $\boldsymbol{\zeta}$.
Moreover, the vectors ${\bf h}({\boldsymbol{\zeta}}, {\bf x}_{{\rm t}})\in{\mathbb C}^{N_{\rm t}\times1}$ and ${\bf h}({\boldsymbol{\zeta}}, {\bf x}_{{\rm r}})\!\in\!{\mathbb C}^{N_{\rm r}\times1}$ denote the free-space channel vectors between the target and the Tx/Rx-PASS, respectively, and are defined in the same form as those in~\eqref{eq:element_h}.
Denote ${\bf X}\triangleq[{\bf x}_{\rm t}\ {\bf x}_{\rm r}]^{\mathsf T}$ and ${\bf G}({\boldsymbol{\zeta}},{\bf X})\triangleq {\bf h}({\boldsymbol{\zeta}},{\bf x}_{\rm r}){\bf h}^{\mathsf T}({\boldsymbol{\zeta}},{\bf x}_{\rm t})$. 
Then, the corresponding baseband observation at the BS can be expressed as follows:
\begin{align}\label{eq:obs_signal}
\hat{s}\triangleq{\boldsymbol{\phi}}_{\rm r}^{\mathsf T}({\bf x}_{\rm r}){\bf y}_{\rm s}=\beta_{\rm s}\,{\boldsymbol{\phi}}_{\rm r}^{\mathsf T}({\bf x}_{\rm r}){\bf G}({\boldsymbol{\zeta}},{\bf X}){\boldsymbol{\phi}}_{\rm t}({\bf x}_{\rm t})\,{\bf s}+\tilde{z}_{\rm s}.
\end{align}
Here, ${\boldsymbol{\phi}}_{\rm r}({\bf x}_{\rm r})\!\in\!\mathbb{C}^{N_{\rm r}\times1}$ denotes the normalized in-waveguide propagation vector from the Rx-PAs to the feed point of the Rx-PASS, which follows the same modeling principle as ${\boldsymbol{\phi}}_{\rm t}({\bf x}_{\rm t})$ in~\eqref{eq:phi_t}. 
The effective noise after combining is given by $\tilde{z}_{\rm s}={\boldsymbol{\phi}}_{\rm r}^{\mathsf T}({\bf x}_{\rm r}){\bf z}_{\rm s}\sim{\mathcal{CN}}({ 0},\sigma^2_{\rm s})$. Conditioned on the unknown target parameters $\boldsymbol{\zeta}$, $\hat{s}$ is a complex Gaussian distribution as follows:
\begin{align}\label{eq:slotwise_obs}
\hat{s}&\sim\mathcal{CN}\big({\mu}(\boldsymbol{\zeta},{\bf X}),\,\sigma^2_{\rm s}\big),
\end{align}
where
\begin{align}
{\mu}(\boldsymbol{\zeta},{\bf X})&=\beta_{\rm s}\,{\boldsymbol{\phi}}_{\rm r}^{\mathsf T}({\bf x}_{\rm r}){\bf G}({\boldsymbol{\zeta}},{\bf X}){\boldsymbol{\phi}}_{\rm t}({\bf x}_{\rm t}){s}.\label{eq:obs_model_final}
\end{align}

Upon receiving $\hat{s}$, the BS estimates the target locations, yielding an estimate $\hat{\boldsymbol{\zeta}}$. The sensing accuracy is quantified via the mean-square error (MSE), which is defined as follows:
\begin{align}
\mathsf{MSE}_{\boldsymbol{\zeta}}
\triangleq
\mathbb{E}\!\left\{\|\hat{\boldsymbol{\zeta}}-\boldsymbol{\zeta}\|_2^2\right\}.
\label{eq:MSE_def_total}
\end{align}
Since the sensing MSE is difficult to express analytically, we adopt the BCRB as the sensing performance metric, which provides a theoretical lower bound on the MSE~\cite{vantrees2004detection, wang2025hybrid}. Specifically, the BCRB satisfies
\begin{align}
\mathsf{MSE}_{\boldsymbol{\zeta}} \geq \mathsf{BCRB}({\bf X}) =\mathsf{tr}\left\{\mathbf{F}^{-1}\left({\boldsymbol\zeta}, \mathbf X \right) \right\},
\label{eq:BCRB_lower}
\end{align}
where $\mathbf{F}({\boldsymbol\zeta}, \mathbf X)\in\mathbb{R}^{2\times2}$ denotes the Bayesian Fisher information matrix (BFIM) associated with the target location parameter vector $\boldsymbol{\zeta}$. 
It consists of two additive components, namely the observation Fisher information and the prior Fisher information, which satisfies~\cite{5571900}
\begin{align}\label{eq:BFIM}
\mathbf{F}({\boldsymbol\zeta}, \mathbf{X}) =\widetilde{\mathbf{F}}({\boldsymbol\zeta}, \mathbf{X}) +\mathbf{F}^{\rm p}({\boldsymbol\zeta}).
\end{align}
Here, $\widetilde{\mathbf{F}}({\boldsymbol\zeta},\mathbf{X})$ represents the Fisher information contributed by the observation model, and $\mathbf{F}^{\rm p}({\boldsymbol\zeta})$ captures the prior information determined by the statistical distribution of $\boldsymbol{\zeta}$.

\subsubsection{PFIM Derivation}
The prior Fisher information matrix (PFIM) quantifies the prior knowledge regarding ${\boldsymbol{\zeta}}$, which is defined as follows~\cite{kay1993fundamentals}: 
\begin{align}\label{eq:def_F_p}
\mathbf{F}^{\rm p}({\boldsymbol\zeta}) \triangleq -\mathbb{E}_{\boldsymbol\zeta}\!\left[\nabla_{\boldsymbol\zeta}^{2}\ln p(\boldsymbol\zeta)\right].
\end{align}
For analytical tractability, we model the target coordinates as independent Gaussian random variables\footnote{While this work adopts a Gaussian prior for tractability, the framework can be extended to more general distributions, such as Gaussian mixtures, which remains a promising direction for future research.}, i.e., $u^{\alpha} \sim \mathcal{N}(\mu_{{\alpha}}, \sigma_{{\alpha}}^2)$ for ${\alpha} \in \{x, y\}$. Here, the mean $\mu_{\alpha}$ represents the predicted target location derived from historical observations, while the variance $\sigma_{\alpha}^2$ characterizes the associated uncertainty.
Substituting the independent Gaussian PDFs into \eqref{eq:def_F_p}, the log-prior becomes a quadratic function, and the PFIM is derived as the inverse of the prior covariance matrix as follows:
\begin{align}
\mathbf{F}^{\rm p}({\boldsymbol\zeta}) = {\mathsf{Diag}}\big(\sigma^{-2}_{x},\sigma^{-2}_{y} \big).
\end{align}
Note that the off-diagonal elements vanish due to the statistical independence between the $x$ and $y$-coordinates.

\subsubsection{OFIM Derivation}
Similar to the PFIM defined in~\eqref{eq:def_F_p}, the observation Fisher information matrix (OFIM) is defined as follows~\cite{kay1993fundamentals}:
\begin{align}
\widetilde{\mathbf F}(\boldsymbol\zeta, \mathbf X)
&\triangleq
\mathbb E_{\boldsymbol\zeta}\big[
  \nabla_{\boldsymbol\zeta}\ln p( {\hat s}|\boldsymbol\zeta, \mathbf X) \  \  
  \nabla_{\boldsymbol\zeta}\ln p( {\hat s}|\boldsymbol\zeta, \mathbf X)^{\mathsf T}
\big].
\label{eq:def_F_o}
\end{align}
For our CSCG model in~\eqref{eq:slotwise_obs}, the OFIM admits the closed form as follows:
\vspace{-5pt}\begin{align}
\widetilde{\mathbf F}(\boldsymbol\zeta, \mathbf X) &= \frac{2}{\sigma^2_{\rm s}}\mathbb E_{\boldsymbol\zeta}\left[\Re\Big\{\big(\nabla_{\boldsymbol\zeta}\mu\big)^{\mathsf H}\big(\nabla_{\boldsymbol\zeta}\mu\big)\Big\}\right] \notag \\
&\triangleq \frac{2}{\sigma^2}
\mathbb E_{\boldsymbol\zeta}\left[\Re\Big\{\mathbf J^{\mathsf H}(\boldsymbol\zeta, \mathbf X)\mathbf J(\boldsymbol\zeta, \mathbf X)\Big\}\right],\label{eq:Fo_closed}
\end{align}
where
\vspace{-5pt}\begin{align}
\mathbf J(\boldsymbol\zeta, \mathbf X)&=\begin{bmatrix}
\partial_{u^x}\mu \!\!&\!\! \partial_{u^y}\mu
\end{bmatrix} \triangleq
 \begin{bmatrix}
f_{x}(\boldsymbol\zeta, \mathbf X) & f_{y}(\boldsymbol\zeta, \mathbf X)
\end{bmatrix}.
\label{eq:Jrow_def_unified}
\end{align}
By introducing the unified derivative kernel
\begin{align}
    \mathcal K_{\alpha}^{p,n}(\boldsymbol\zeta,\mathbf X)
    \triangleq 
    \eta^{\frac{1}{2}}\,[\boldsymbol\phi_p(\mathbf x_p)]_n
    \frac{{\rm e}^{-{\rm j}k_0 R_{p,n}}\chi_{\alpha}^{p,n}\big(1+{\rm j}k_0 R_{p,n}\big)}
    {R_{p,n}^3},
\end{align}where $\chi_{x}^{p,n}=u^x - x_{p,n}$, $\chi_{y}^{p,n}=u^y$ with $\alpha\in\{x,y\}$, $p\in\{{\rm t},{\rm r}\}$, the functions $f_x$ and $f_y$ in \eqref{eq:Jrow_def_unified} can then be expressed as follows:
\vspace{-5pt}\begin{align}\label{eq:f_alpha}
    f_{\alpha}(\boldsymbol\zeta,\mathbf X)
    = - g_{\rm t}(\boldsymbol\zeta,\mathbf x_{\rm t})
      \sum_{n=1}^{N_{\rm t}} \mathcal K_{\alpha}^{{\rm t},n}
      - g_{\rm r}(\boldsymbol\zeta,\mathbf x_{\rm r})
      \sum_{n=1}^{N_{\rm r}} \mathcal K_{\alpha}^{{\rm r},n},
\end{align}
where $g_{\rm t}(\boldsymbol\zeta,\mathbf x_{\rm t})= \mathbf h^{\mathsf T}(\boldsymbol\zeta,\mathbf x_{\rm t})\,\boldsymbol\phi_{\rm t}(\mathbf x_{\rm t})$ and $g_{\rm r}(\boldsymbol\zeta,\mathbf x_{\rm r}) = \mathbf h^{\mathsf T}(\boldsymbol\zeta,\mathbf x_{\rm r})\,\boldsymbol\phi_{\rm r}(\mathbf x_{\rm r})$
represent the effective channels for the Tx-PASS and Rx-PASS, respectively.

Consequently, the OFIM in~\eqref{eq:Fo_closed} is given by
\begin{align}
\widetilde{\mathbf F}(\boldsymbol\zeta, \mathbf X)  = \frac{2P_{\rm t}}{\sigma^2_{\rm s}}
\begin{bmatrix}
\widetilde F_{xx}(\boldsymbol\zeta, \mathbf X) & \widetilde F_{xy}(\boldsymbol\zeta, \mathbf X)\\[4pt]
\widetilde F_{yx}(\boldsymbol\zeta, \mathbf X) & \widetilde F_{yy}(\boldsymbol\zeta, \mathbf X)
\end{bmatrix},
\label{eq:Fo_block_unified}
\end{align}
where
\begin{align}
\widetilde F_{\alpha\beta}(\boldsymbol\zeta, \mathbf X)\triangleq\mathbb E_{\boldsymbol\zeta}\left[\Re\left\{f_{\alpha}^*(\boldsymbol\zeta, \mathbf X)f_{\beta}(\boldsymbol\zeta, \mathbf X)\right\}\right]\label{eq:Fxx_def}
\end{align}
for $\alpha,\beta\in\{x,y\}$.

For the diagonal elements when $\alpha = \beta$, taking the case of $\alpha=\beta=x$ as an example and substituting the factorized Gaussian prior density $p(\boldsymbol{\zeta})$ into \eqref{eq:Fxx_def} yields
\begin{align}
\widetilde F_{xx}(\boldsymbol\zeta, \mathbf X)\!= \!\Re\!\left\{\!\iint_{\mathbb R^2}\!\!\big|f_{x}(\boldsymbol\zeta, \mathbf X)\big|^2 p(u^x)p(u^y)\,{\rm d}u^x{\rm d}u^y\!\right\}.\label{eq:27a}
\end{align}
By applying the standard variable substitution $u^x=\mu_{x}+\sigma_{x}\nu$ and $u^y=\mu_{y}+ \sigma_{y}\eta$, and noting that $p(u^x){\rm d}u^x \!=\! \frac{e^{-\nu^2}}{\sqrt{\pi}}\!{\rm d}\nu$ and $p(u^y){\rm d}u^y\! = \!\frac{e^{-\eta^2}}{\sqrt{\pi}}\!{\rm d}\eta$, we rewrite the integral in \eqref{eq:27a} into the canonical Gauss-Hermite form as follows:
\begin{multline}
\widetilde F_{xx}(\boldsymbol\zeta, \mathbf X) = \Re\left\{\frac{1}{\pi}\iint_{\mathbb R^2}\big|f_{x}(\mu_{x}+ \sigma_{x}\nu,\big.\right. \\
\left. \big.\mu_{y}+ \sigma_{y}\eta)\big|^2 e^{-\nu^2}e^{-\eta^2}\,{\rm d}\nu\,{\rm d}\eta\right\}. \label{eq:27c}
\end{multline}However, the expectations in \eqref{eq:27c} do not yield closed-form expressions. To address this, we adopt the Gauss-Hermite quadrature (GHQ) rule $\int_{-\infty}^{+\infty}\psi(x)e^{-x^2}{\rm d}x\approx\sum_{i=1}^{T}\omega_i\psi(\xi_i)$ to approximate the integrals, where $\{\omega_i\}$ and $\{\xi_i\}$ denote the weight and abscissa factors of Gauss-Hermite integration, and $T$ is a complexity-vs-accuracy tradeoff parameter. Accordingly, we approximate the FIM entries as follows:
\begin{multline}
\widetilde F_{xx}(\boldsymbol\zeta, \mathbf X)\approx\sum_{i=1}^{T}\sum_{j=1}^{T}\omega_i\omega_j\Re\Big\{\frac{1}{\pi}\big|f_{x}(\mu_{x}+\sqrt{2}\sigma_{x}\xi_i, \\[-10pt]
\mu_{y}+\sqrt{2}\sigma_{y}\xi_j)\big|^2\Big\}.\label{eq:29_mm_xx}
\end{multline}The term $\widetilde F_{yy}(\boldsymbol\zeta, \mathbf X)$ follows an identical derivation by replacing the partial derivative $f_{x}(\cdot)$ with $f_{y}(\cdot)$.

For the off-diagonal entries when $\alpha\neq \beta$, the expectation must be evaluated over the coupled product of the derivatives. Accordingly, the GHQ approximation is applied to the joint term as follows:
\begin{multline}
\widetilde F_{xy}(\boldsymbol\zeta, \mathbf X) \approx \sum_{i=1}^{T}\sum_{j=1}^{T}\omega_i\omega_j\Re\Big\{\frac{1}{\pi}f_{x}^*(\mu_{x}+ \sqrt{2}\sigma_{x}\xi_i, \\
\mu_{y}+ \sqrt{2}\sigma_{y}\xi_j) \cdot f_{y}(\mu_{x}+ \sqrt{2}\sigma_{x}\xi_i, \mu_{y}+ \sqrt{2}\sigma_{y}\xi_j)\Big\}.\label{eq:cross_mn}
\end{multline}Finally, the value of $\widetilde F_{yx}(\boldsymbol\zeta, \mathbf X)$ is obtained via the Hermitian symmetry of the FIM, i.e., $\widetilde F_{yx} = \widetilde F_{xy}$.
Since both $\widetilde{\mathbf{F}}({\boldsymbol\zeta}, \mathbf{X})$ and $\mathbf{F}^{\rm p}({\boldsymbol\zeta})$ are of order $2 \times 2$, the matrix inversion can be calculated in a closed-form, which yields
\begin{align}
\mathsf{BCRB}(\mathbf X)&=\mathsf{tr}\left\{\mathbf F^{-1}(\boldsymbol\zeta, \mathbf X)\right\} \notag\\
&= \frac{\widetilde F_{xx} + \widetilde F_{yy} + \sigma_x^{-2} + \sigma_y^{-2}}{(\widetilde F_{xx} + \sigma_x^{-2})(\widetilde F_{yy} + \sigma_y^{-2}) - |\widetilde F_{xy}|^2}.
\label{eq:BCRB_2x2_exact_SC}
\end{align}

It is clear that both the multicast rate in~\eqref{eq:rate} and the BCRB in~\eqref{eq:BCRB_2x2_exact_SC} depend critically on the PA placement. In general, the PA configuration that is optimal for sensing is not optimal for multicasting, which gives rise to an inherent sensing-communications tradeoff. To characterize and navigate this tradeoff, we consider three complementary design criteria: 1) S-C design, which minimizes the BCRB subject to a multicast-rate requirement; 2) C-C design, which maximizes the MMF multicast rate subject to a sensing-accuracy requirement; and 3) Pareto-optimal design, which characterizes the achievable sensing-communications tradeoff by tracing the Pareto boundary via a rate-profile formulation.
\vspace{-10pt}
\section{Single-PA Case}
In this section, we first investigate a single-PA setting to derive the optimal solution to PA placement for the S-C, C-C, and Pareto-optimal designs.
\vspace{-10pt}
\subsection{Communications-Centric Design}
Since the downlink multicast rate depends only on the Tx-PA placement, the C-C design simplifies to
\begin{align}\label{prob:CCP_single_noSensing}
\max_{x_{\rm t}} \  {\mathsf R}_{\rm c}(x_{\rm t}), \qquad
\text{s.t.}\  x_{\rm t} \in [0,D_{\rm x}].
\end{align}
For the $k$th user located at $(\hat x_{{\rm c},k}, \hat y_{{\rm c},k})$, the received SNR under a single Tx-PA located at $x_{\rm t}$ is
\begin{align}
\gamma_k(x_{\rm t}) = \sigma_k^{-2}{P_{\rm t}}
[{(x_{\rm t}-\hat x_{{\rm c},k})^2 + (y_{\rm t}-\hat y_{{\rm c},k})^2 + h^2}]^{-1}.
\end{align}
Therefore, maximizing the multicast rate is equivalent to minimizing the maximum distance between the Tx-PA and all multicast users, which can be expressed as follows:
\begin{align}
\min_{x_{\rm t}} \; 
\max_{k\in\mathcal K}\bigl[(x_{\rm t}-\hat x_{{\rm c},k})^2 + \Delta^2_k\bigr],
\label{eq:CCP_minmax_noSensing}
\end{align}
where $\Delta^2_k \triangleq (y_{\rm t}-\hat y_{{\rm c},k})^2 + h^2$.

Define $d_k(x)\triangleq (x-\hat x_{{\rm c},k})^2+\Delta^2_k$ and $d(x)\triangleq \max_{k\in\mathcal K}d_k(x)$. 
Each $d_k(x)$ is a convex quadratic function of $x$, and hence $d(x)$ is also convex on $[0,D_{\rm x}]$. 
Since $d_k(x)$ attains its unique minimum at $x=\hat x_{{\rm c},k}$, each $d_k(x)$ is strictly unimodal. 
As $d(x)$ represents the upper envelope of these convex curves, the global minimizer of $d(x)$ must occur at a point where either i) a single curve $d_k(x)$ reaches its minimum, or  ii) two curves intersect and yield equal values. 

Consequently, a closed-form structure of the optimal transmit PA placement can be established by identifying a finite candidate set that guarantees optimality. Specifically, the optimal Tx-PA position can be determined from the following finite candidate set: i) the set of multicast users' $x$-coordinates, $\mathcal X_1 \triangleq \{\hat x_{{\rm c},1},\ldots,\hat x_{{\rm c},K}\}$; and ii) the set of pairwise equal-distance points, $\mathcal X_2 \triangleq \{\xi_{i,j}\,|\, d_i(\xi_{i,j})=d_j(\xi_{i,j}),\, i<j\}$, where each intersection point $\xi_{i,j}$ is given by
\begin{align}
\xi_{i,j}
= \frac{\Delta^2_j - \Delta^2_i}{2(\hat x_{{\rm c},j}-\hat x_{{\rm c},i})}
+ \frac{\hat x_{{\rm c},i} + \hat x_{{\rm c},j}}{2}.
\end{align}
Including the boundary points, the complete candidate set is $\mathcal G_{\rm c} = \{0, D_{\rm x}\} \cup \mathcal X_1 \cup \mathcal X_2$. Finally, the globally optimal Tx-PA placement for the C-C design is obtained by evaluating the multicast objective at all candidate points in $\mathcal G_{\rm c}$ as follows:
\begin{align}
x_{\rm t}^\star 
= \arg\min_{x\in \mathcal G_{\rm c}} 
\max_{k\in\mathcal K}\bigl[(x-\hat x_{{\rm c},k})^2 + \Delta^2_k\bigr].
\end{align}
\vspace{-15pt}
\subsection{Sensing-Centric Design}\label{S_C_singlePA}
Obtaining closed-form expressions for the optimal Tx- and Rx-PA placements that minimize the BCRB in~\eqref{eq:BCRB_2x2_exact_SC} is in general intractable, because the sensing performance depends on the joint Tx- and Rx-PA placement. In particular, the off-diagonal terms of the BFIM couple the two variables, so that the optimal Tx-PA location depends on the Rx-PA location and vice versa. Consequently, the first-order optimality conditions lead to coupled nonlinear equations in $(x_{\rm t},x_{\rm r})$, for which a closed-form solution is generally unavailable.

To expose the key geometric structure of the S-C design, we therefore adopt a simplified symmetric setup in which the target lies on the perpendicular bisector of the two waveguides. Specifically, we treat its $y$-coordinate as known and fixed at $u^y=\frac{y_{\rm t}+y_{\rm r}}{2}$.
Under the single-PA configuration, the Euclidean distances between the target and the Tx- or Rx-PAs are expressed as follows:
\begin{align}
R_{\rm t}=\sqrt{(u^x-x_{\rm t})^2+\Delta_{\rm s}^2}, \ 
R_{\rm r}=\sqrt{(u^x-x_{\rm r})^2+\Delta_{\rm s}^2},
\end{align}
 where $\Delta_{\rm s}=\sqrt{h^2+\big(\tfrac{y_{\rm r}-y_{\rm t}}{2}\big)^2}$. Consequently, we focus the estimation solely on the $x$-coordinate $u^x$. Based on this assumption, the mean of the received echo signal in \eqref{eq:obs_model_final} simplifies to the following:
\vspace{-10pt} \begin{align}
 \mu\left(u^x, x_{\rm t}, x_{\rm r}\right)=\frac{C {\rm e}^{-{\rm j} k_0\left(R_{\mathrm{t}}+R_{\mathrm{r}}\right)}}{R_{\mathrm{t}} R_{\mathrm{r}}},
 \end{align}
 where $C \triangleq \beta_{\rm s}\sqrt{ \eta P_{\mathrm{t}}} {\rm e}^{-{\rm j} k_{\mathrm{g}}\left(x_{\mathrm{t}}+x_{\mathrm{r}}\right)}$ aggregates the system scaling factors and the phase shift introduced by in-waveguide propagation. 
 Since the unknown parameter reduces to the scalar $u^x$, the 
\emph{conditional} observation Fisher information for estimating $u^x$ is a scalar quantity, denoted by $F_{xx}(u^x,x_{\rm t},x_{\rm r})$.
Taking the prior expectation over $u^x$, the corresponding scalar OFIM entry in~\eqref{eq:Fo_block_unified} can be simplified as follows:
\begin{align}
\widetilde{\mathbf F}(u^x, \mathbf X) = \widetilde F_{xx}(u^x, x_{\rm t},x_{\rm r}) \triangleq \mathbb E_{u^x}\big[F_{xx}(u^x,x_{\rm t},x_{\rm r})\big].
\end{align}
Accordingly, the one-dimensional BFIM becomes $F(x_{\rm t},x_{\rm r}) = \widetilde F_{xx}(u^x, x_{\rm t},x_{\rm r}) + \sigma_x^{-2}$,
and the BCRB in~\eqref{eq:BCRB_2x2_exact_SC} simplifies to
\begin{align}
\mathsf{BCRB}(u^x, x_{\rm t},x_{\rm r}) = \left({\widetilde F_{xx}(u^x, x_{\rm t},x_{\rm r})+\sigma_x^{-2}}\right)^{-1}.
\end{align}
For a fixed prior term $\sigma_x^{-2}$, improving the sensing accuracy is thus equivalent to increasing $\widetilde F_{xx}(u^x, x_{\rm t},x_{\rm r})$. Consequently, we formulate the S-C design problem as follows:
\begin{align}\label{eq:1D_BCRB_obj}
\max_{\{x_{\rm t}, x_{\rm r}\}} \widetilde F_{xx}(u^x, x_{\rm t},x_{\rm r}),\  \text{s.t.}\ x_{\rm t} \in [0,D_{\rm x}],   x_{\rm r} \in [0,D_{\rm x}].
\end{align}
To make the dependence of the sensing information on the PA placement explicit, we next rewrite the conditional Fisher information in terms of the LoS distances and angles.
Specifically, define the direction cosine of the Tx/Rx links along the $x$-axis as
${\rm cos}\, \theta_{p} \triangleq (u^x - x_{p})/R_p$ for ${p} \in \{{\rm t}, {\rm r}\}$.
Then, the conditional Fisher information for estimating $u^x$ can be expressed as follows:
\begin{align}
F_{xx}(\theta_{\rm t}, \theta_{\rm r})\!& =\! \frac{2|\mu|^2}{\sigma_{\rm s}^2}  \Big[ \left(k_0^2 + R_{\rm t}^{-2}\right){\rm cos}\,\theta_{\rm t}^2 + \left(k_0^2 + R_{\rm r}^{-2}\right){\rm cos}\,\theta_{\rm r}^2 \notag \\ 
&+ 2\left(k_0^2 + (R_{\rm t}R_{\rm r})^{-1}\right){\rm cos}\,\theta_{\rm t}{\rm cos}\,\theta_{\rm r}\Phi \Big],
\label{eq:Fxx_exact_compact}
\end{align}
where $\Phi \triangleq \cos\big(k_0(R_{\rm t}-R_{\rm r})\big)$, and $|\mu|^2$ denotes the path-loss dependent signal power.

Directly solving~\eqref{eq:1D_BCRB_obj} is still challenging, because it involves a prior expectation and the conditional term $F_{xx}(u^x,x_{\rm t},x_{\rm r})$ is highly non-convex with respect to (w.r.t.) the transmit and receive PA placements.
To obtain a tractable design with clear physical insight, we proceed in two steps.
First, for a fixed target realization $u^x$, we study the conditional Fisher information $F_{xx}(u^x,x_{\rm t},x_{\rm r})$ and show that it is maximized when the Tx-PA and Rx-PA are placed symmetrically around $u^x$ along the $x$-axis (see Proposition~1). This restricts the transceiver layout to a symmetric manifold that can be parameterized by a center $c$ and a nonnegative displacement $d$.
Second, under a symmetric prior for $u^x$, we show that the Bayes-optimal center is $c=\mu_x$ (see Proposition~2), which leads to a closed-form displacement summarized in Remark~1.
\vspace{-5pt}
\begin{proposition}
For any given target $x$-coordinate $u^x$, under the symmetric setup $u^y=(y_{\rm t}+y_{\rm r})/2$, 
the conditional observation Fisher information $F_{xx}(u^x,x_{\rm t},x_{\rm r})$ is maximized when the Tx-PA and Rx-PA are placed symmetrically around the target along the $x$-axis, i.e., $|u^x-x_{\rm t}|=|u^x-x_{\rm r}|$.
\end{proposition}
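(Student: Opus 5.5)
We would work directly with the conditional Fisher information \eqref{eq:Fxx_exact_compact}, written in terms of the signed offsets $a\triangleq u^x-x_{\rm t}$ and $b\triangleq u^x-x_{\rm r}$. Then $R_{\rm t}=\sqrt{a^2+\Delta_{\rm s}^2}$, $R_{\rm r}=\sqrt{b^2+\Delta_{\rm s}^2}$, $\cos\theta_{\rm t}=a/R_{\rm t}$, $\cos\theta_{\rm r}=b/R_{\rm r}$, and $|\mu|^2=\eta P_{\rm t}/(R_{\rm t}^2R_{\rm r}^2)$. Because the symmetric setup fixes $u^y$ on the bisector, both links share the same offset $\Delta_{\rm s}$. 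Hence $F_{xx}$ becomes a function $F(a,b)$ that is invariant under the exchange $a\leftrightarrow b$.

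The plan rests on the fact that in the regime of interest $k_0 R_p\gg 1$. In that regime the $k_0^2$ terms dominate the bracket, so
\begin{align}
F(a,b)\approx \frac{2\eta P_{\rm t}k_0^2}{\sigma_{\rm s}^2}\,\frac{\cos^2\theta_{\rm t}+\cos^2\theta_{\rm r}+2\cos\theta_{\rm t}\cos\theta_{\rm r}\Phi}{R_{\rm t}^2R_{\rm r}^2}.
\end{align}
The key step is to bound the phase-coupling term using $\Phi\le 1$. Assuming the maximizer has $\cos\theta_{\rm t}\cos\theta_{\rm r}\ge 0$, this gives $F(a,b)\lesssim C\,(\cos\theta_{\rm t}+\cos\theta_{\rm r})^2/(R_{\rm t}^2R_{\rm r}^2)$. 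The sign assumption is justified because flipping the sign of one offset leaves every term unchanged except the cross term. The bound is attained with $\Phi=1$ exactly when $R_{\rm t}=R_{\rm r}$, that is, $|a|=|b|$, which is the claimed symmetric placement.

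It then remains to show that the upper bound $G(a,b)=\bigl(q(a)+q(b)\bigr)^2 r(a)r(b)$ is itself maximized on the diagonal $|a|=|b|$. Here $q(s)=s/\sqrt{s^2+\Delta_{\rm s}^2}$ and $r(s)=1/(s^2+\Delta_{\rm s}^2)$. We would reparameterize by $c=(a+b)/2$ and $d=(a-b)/2$ and use symmetry in $d\mapsto -d$. Taking the logarithm, we would show that $\log G$ is concave in $d$ for fixed $c$, or alternatively verify that the stationarity equations $\partial_a\log G=\partial_b\log G=0$ force $a=b$. Combining the two steps shows that the global maximum of $F$ is achieved only where $\Phi=1$ and $R_{\rm t}=R_{\rm r}$ simultaneously, which is consistent with both bounds being tight there.

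We expect the main obstacle to be this second step, together with controlling the neglected $R^{-2}$ and $(R_{\rm t}R_{\rm r})^{-1}$ corrections. The function $q$ is not concave on all of $\mathbb R$, so a plain Jensen or AM--GM argument fails globally. We would first try AM--GM on $r(a)r(b)\le r\bigl(\sqrt{ab}\bigr)^2$ combined with the monotonicity of $q$ on the relevant half-line. If that fails, we would fall back to the explicit stationarity analysis. For the exact, non-asymptotic expression, we note that $(k_0^2+R_{\rm t}^{-2})$ and $(k_0^2+R_{\rm r}^{-2})$ share the same symmetry structure and that $(R_{\rm t}R_{\rm r})^{-1}\le\tfrac12(R_{\rm t}^{-2}+R_{\rm r}^{-2})$. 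We would use these facts to show the corrections are dominated by the leading terms and preserve the diagonal maximizer.
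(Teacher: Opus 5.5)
Your route differs from the paper's, and its first step is sounder. The paper does three things:
\begin{itemize}
\item it maximizes only the path-loss factor $(R_{\rm t}^2R_{\rm r}^2)^{-1}$ along the fixed-mean-offset family $m\pm\varepsilon$, which needs $m<\Delta_{\rm s}$;
\item it ignores the $\varepsilon$-dependence of the angular factors;
\item it dismisses $\Phi$ by averaging over the prior, which concerns the Bayesian $\widetilde F_{xx}$ rather than the conditional $F_{xx}$ at fixed $u^x$ that the proposition is about.
\end{itemize}
Your bound $\cos\theta_{\rm t}\cos\theta_{\rm r}\Phi\le|\cos\theta_{\rm t}||\cos\theta_{\rm r}|$ is tight at $a=b$ and works pointwise in $u^x$. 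Use it in this absolute-value form rather than through the sign-flip argument. That argument only shows $\cos\theta_{\rm t}\cos\theta_{\rm r}\Phi\ge0$ at a maximizer, not $\cos\theta_{\rm t}\cos\theta_{\rm r}\ge 0$. Differentiating $\mu\propto{\rm e}^{-{\rm j}k_0(R_{\rm t}+R_{\rm r})}/(R_{\rm t}R_{\rm r})$ directly in fact gives the bracket with $\Phi\equiv1$. So for the true conditional FI this step is an identity, and all the content lies in Step 2.

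Step 2 has a genuine gap.
\begin{itemize}
\item \textbf{Concavity plan (false).} With $a=c+d$, $b=c-d$, one finds $\partial_d^2\log G|_{d=0}=(4c^2-10\Delta_{\rm s}^2)/(c^2+\Delta_{\rm s}^2)^2$. This is positive for $|c|>\sqrt{5/2}\,\Delta_{\rm s}$, so at such centers the symmetric point is a local \emph{minimum} along the line. This is the same phenomenon behind the paper's restriction $m<\Delta_{\rm s}$.
\item \textbf{AM--GM route (too lossy).} As $a\to0$, $b\to\infty$ with $ab\to0$, the bound $(q(a)+q(b))^2r(\sqrt{ab})^2$ tends to $\Delta_{\rm s}^{-4}$. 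That exceeds the diagonal maximum $16/(27\Delta_{\rm s}^4)$.
\item \textbf{Stationarity fallback (works for the leading term).} Write $G=h^2$ with $h=a/(R_{\rm t}^2R_{\rm r})+b/(R_{\rm t}R_{\rm r}^2)$. Then $\partial_ah=\partial_bh=0$ become $(\Delta_{\rm s}^2-a^2)R_{\rm r}=abR_{\rm t}$ and $(\Delta_{\rm s}^2-b^2)R_{\rm t}=abR_{\rm r}$. Hence $(\Delta_{\rm s}^2-a^2)R_{\rm r}^2=(\Delta_{\rm s}^2-b^2)R_{\rm t}^2$, i.e.\ $a^2=b^2$, which forces $a=b=\pm\Delta_{\rm s}/\sqrt2$.
\item \textbf{Corrections.} ``The corrections are dominated by the leading terms'' can only place the exact maximizer \emph{near} the diagonal. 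Putting it on the diagonal needs exchange symmetry plus local uniqueness, and then holds only for large $k_0\Delta_{\rm s}$.
\end{itemize}

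A repair that handles the corrections exactly is to fix the mean \emph{angle} instead of the mean offset. Set $a=\Delta_{\rm s}\tan\alpha$, $b=\Delta_{\rm s}\tan\beta$, $\sigma=(\alpha+\beta)/2$, $\tau=(\alpha-\beta)/2$. Then the $\Phi$-free bracket divided by $R_{\rm t}^2R_{\rm r}^2$ equals
\[
\frac{(\cos2\sigma+\cos2\tau)^2}{4\Delta_{\rm s}^4}\Big[4k_0^2\sin^2\!\sigma\cos^2\!\tau+\Delta_{\rm s}^{-2}\sin^2\!2\sigma\cos^2\!2\tau\Big].
\]
On the feasible set $|\sigma|+|\tau|<\pi/2$ we have $\cos2\sigma+\cos2\tau=2\cos\alpha\cos\beta>0$. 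For every fixed $\sigma$, both $(\cos2\sigma+\cos2\tau)\cos\tau$ and $(\cos2\sigma+\cos2\tau)|\cos2\tau|$ are maximized at $\tau=0$. This gives $a=b$ for every $k_0$, with no asymptotics.
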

\vspace{-5pt}
\begin{IEEEproof}
Please refer to Appendix~A.
\end{IEEEproof}
Motivated by Proposition~1, we parameterize this symmetric design class by a \emph{symmetry center $c$} and a nonnegative \emph{$x$-axis displacement $d\ge 0$} (i.e., $d$ is the absolute $x$-coordinate difference from the center), namely
\begin{align}
\mathcal M(c)\triangleq 
\Big\{(x_{\rm t},x_{\rm r}) \,\big|\, x_{\rm t}=c\pm d,\; x_{\rm r}=c\pm d,\; d\ge 0\Big\}.
\label{eq:manifold_c}
\end{align}
Restricting $(x_{\rm t},x_{\rm r})$ to $\mathcal M(c)$ reduces the two-dimensional placement to a one-dimensional search over $d$ for any fixed $c$.
We select the same-side branch $x_{\rm t}=x_{\rm r}=c-d$. In contrast, for the opposite-side equal-offset placement $x_{\rm t}=c-d$ and $x_{\rm r}=c+d$, the geometric placement implies $R_{\rm t}=R_{\rm r}$ and $\cos\theta_{\rm r}=-\cos\theta_{\rm t}$, Substituting these relations into \eqref{eq:Fxx_exact_compact} shows that the bracketed term cancels, and hence $F_{xx}=0$. 
Moreover, under a symmetric prior on $\mu_x$, the Bayes-aligned center is attained at the prior mean $c=\mu_x$, as established in Proposition~2.
\vspace{-5pt}
\begin{proposition}\label{prop:center_and_disp}
Assume that the prior of $u^x$ is symmetric about $\mu_x$. Consider the OFIM restricted to the equal-offset manifold $\mathcal M(c)$, i.e., $\widetilde F_{xx}(c,d)\triangleq \mathbb E_{u^x}\!\big[F_{xx}(u^x,c,d)\big]$. The OFIM is stationary w.r.t. the center at the prior mean, i.e., $\partial_c\,\widetilde F_{xx}(c,d)\big|_{c=\mu_x}=0$, which motivates choosing $c=\mu_x$ as the canonical Bayes-aligned center under a symmetric prior. After that, the derivative of the OFIM w.r.t. $d$ admits
\begin{align}\label{eq:derive_Fxx}
\frac{{\rm d}}{{\rm d}d}\,\mathbb{E}_{u^x}\!\big[ F_{xx}(u^x,d)\big]
= \partial_d F_{xx}(\mu_x,d)+\mathcal O(\sigma_x^2),
\end{align}
\end{proposition}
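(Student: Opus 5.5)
The plan is to reduce everything to two structural facts about the conditional Fisher information in \eqref{eq:Fxx_exact_compact}. The first is \emph{translation invariance}: $F_{xx}(u^x,x_{\rm t},x_{\rm r})$ depends on the placement only through the offsets $u^x-x_{\rm t}$ and $u^x-x_{\rm r}$. The second is \emph{reflection invariance}: $F_{xx}$ is unchanged when both offsets flip sign. Both facts come from inspecting \eqref{eq:Fxx_exact_compact}.

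For translation invariance, note that $|\mu|^2=|C|^2/(R_{\rm t}^2R_{\rm r}^2)$, since the waveguide phase ${\rm e}^{-{\rm j}k_{\rm g}(x_{\rm t}+x_{\rm r})}$ in $C$ has unit modulus. Moreover $R_p$, $\cos\theta_p$ and $\Phi$ are functions of $u^x-x_p$ alone, with $\Delta_{\rm s}$ fixed. For reflection invariance, the map $u^x-x_p\mapsto-(u^x-x_p)$ leaves $R_p$ and $\Phi$ unchanged. It flips the sign of both $\cos\theta_{\rm t}$ and $\cos\theta_{\rm r}$, so the squared terms and the product $\cos\theta_{\rm t}\cos\theta_{\rm r}$ are preserved.

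On $\mathcal M(c)$ I will therefore write $F_{xx}(u^x,c,d)=G_\pm(u^x-c,d)$ for the branches $x_{\rm t},x_{\rm r}=c\mp d$. Reflection invariance gives the key identity
\begin{align}
G_-(s,d)=G_+(-s,d),\quad \forall s\in\mathbb R,\ d\ge0. \notag
\end{align}
This says that the mirror image about $c$ of a branch is the other branch of the same manifold.

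\textbf{Stationarity in $c$.} Let $s=u^x-\mu_x$, which is symmetric about zero by assumption. Then $\widetilde F_{xx}^{\pm}(c,d)=\mathbb E_s[G_\pm(s+\mu_x-c,d)]$. The identity above together with $s\overset{d}{=}-s$ gives
\begin{align}
\widetilde F_{xx}^{-}(\mu_x+\delta,d)=\widetilde F_{xx}^{+}(\mu_x-\delta,d). \notag
\end{align}
Hence the OFIM attached to the manifold $\mathcal M(c)$, which treats the two mirror branches as one symmetric object, is an even function of $\delta=c-\mu_x$. In particular, for a single branch, evaluating at $\delta=0$ shows that both branches deliver the same expected information at the prior mean.

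I will then differentiate under the expectation. This is justified by dominated convergence: $R_p\ge\Delta_{\rm s}>0$, so $G_\pm$ and its derivatives are bounded smooth functions. Differentiating the evenness identity at $\delta=0$ gives $\partial_c\widetilde F_{xx}(c,d)|_{c=\mu_x}=0$. Stated directly, $\partial_c$ at $\mu_x$ equals $-\mathbb E_s[\partial_sG(s,d)]$, and this vanishes because the symmetrized $G$ is even in $s$, so $\partial_sG$ is odd while the law of $s$ is symmetric.

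\textbf{Expansion in $d$.} With $c=\mu_x$ fixed, I will Taylor-expand $u^x\mapsto F_{xx}(u^x,d)$ about $\mu_x$ to fourth order with a Lagrange remainder:
\begin{align}
F_{xx}(u^x,d)=F_{xx}(\mu_x,d)+\sum_{m=1}^{3}\frac{\partial_u^mF_{xx}(\mu_x,d)}{m!}s^m+\frac{\partial_u^4F_{xx}(\tilde u,d)}{24}s^4. \notag
\end{align}
Taking expectations, the odd moments vanish by symmetry and $\mathbb E[s^2]=\sigma_x^2$, so
\begin{align}
\mathbb E_{u^x}[F_{xx}]=F_{xx}(\mu_x,d)+\tfrac12\sigma_x^2\,\partial_u^2F_{xx}(\mu_x,d)+\mathcal O(\sigma_x^4). \notag
\end{align}
I will then differentiate in $d$, again exchanging $\frac{\rm d}{{\rm d}d}$ with the expectation by dominated convergence. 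This yields \eqref{eq:derive_Fxx} with the explicit correction $\tfrac12\sigma_x^2\partial_d\partial_u^2F_{xx}(\mu_x,d)$ absorbed into $\mathcal O(\sigma_x^2)$.

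The main obstacle is making the $\mathcal O(\cdot)$ terms rigorous uniformly in $d$. To do this, I would bound the mixed derivatives $\partial_d\partial_u^mF_{xx}$ globally, using that every factor in \eqref{eq:Fxx_exact_compact} is a rational function of $R_p$ times a bounded trigonometric factor, with $R_p\ge\Delta_{\rm s}$. Each $u$- or $d$-derivative of ${\rm e}^{-{\rm j}k_0R_p}$ produces at most a factor $k_0|\cos\theta_p|\le k_0$. The constants will therefore scale like $k_0^{m+3}$, which is finite but large. This is also why the expansion should be read as a small-$\sigma_x$ (relative to $\lambda$) statement. The Gaussian prior supplies all the moments needed for the remainder.
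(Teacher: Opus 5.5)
Your expansion in $d$ is fine and is essentially the paper's argument. The paper differentiates under the expectation first and Taylor-expands $\partial_d F_{xx}$; you expand $F_{xx}$ and then control $\partial_d$ of the remainder, which amounts to the same thing.

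The gap is in the stationarity-in-$c$ step. Differentiating your identity $\widetilde F^{-}_{xx}(\mu_x+\delta,d)=\widetilde F^{+}_{xx}(\mu_x-\delta,d)$ at $\delta=0$ only gives $\partial_c\widetilde F^{-}_{xx}=-\partial_c\widetilde F^{+}_{xx}$ at $c=\mu_x$. The two mirror branches have opposite slopes, not zero slopes. A zero slope follows only for the branch average $\tfrac12(\widetilde F^{+}_{xx}+\widetilde F^{-}_{xx})$, which is what your ``symmetrized $G$'' computes. That average is not the OFIM of any placement, so it cannot justify centering the actual layout $x_{\rm t}=x_{\rm r}=c-d$ (the branch the paper fixes) at $\mu_x$.

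For that single branch the claim is false. Let $g$ be the even conditional information as a function of the common offset $u^x-x_{\rm t}$, and let $\tilde g(t)\triangleq\mathbb E_s[g(s+t)]$, which is also even. Then $\widetilde F^{+}_{xx}(c,d)=\tilde g(\mu_x-c+d)$, hence $\partial_c\widetilde F^{+}_{xx}|_{c=\mu_x}=-\tilde g'(d)$. This vanishes only at $d=0$ or at critical points of $\tilde g$; as $\sigma_x\to0$ it tends to $-g'(d)<0$ for every $0<d<d^\star$. On this branch $c$ and $d$ enter only through $c-d$. So the ``center'' is not an independent variable, and $c$-stationarity coincides with $d$-stationarity.

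The paper's Appendix~B has the same flaw in a different form. It asserts $f(\delta)=f(-\delta)$ at $c=\mu_x$, but with both PAs at $\mu_x-d$ one has $f(\delta)=g(\delta+d)$, which is even about $-d$, not about $0$. Your branch-swapping identity is precisely what exposes this. So your symmetry analysis is sharper than the paper's, but you should draw the honest conclusion rather than relabel the averaged quantity as $\widetilde F_{xx}$. The symmetric prior makes $\tilde g(\mu_x-x_{\rm t})$ even in the actual position $x_{\rm t}-\mu_x$. Hence optimal same-side placements come in mirror pairs $\mu_x\pm d$ with $\tilde g'(d)=0$. By your second part, the relevant root is $d^\star+\mathcal O(\sigma_x^2)$ when $g''(d^\star)\neq0$.

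A smaller point concerns the constants. On this branch $R_{\rm t}=R_{\rm r}$, so $\Phi\equiv1$ and $F_{xx}\propto(1+k_0^2R^2)(R^2-\Delta_{\rm s}^2)R^{-8}$ is non-oscillatory. Each $u$-derivative therefore costs a factor of order $\Delta_{\rm s}^{-1}$, not $k_0$. The natural small parameter is $\sigma_x/\Delta_{\rm s}$ rather than $\sigma_x/\lambda$. Your $k_0^{m+3}$ constants, while valid, make the expansion look vacuous at the paper's parameters, whereas $\sigma_x/\Delta_{\rm s}$ is small there.
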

\begin{IEEEproof}
Please refer to Appendix~B.
\end{IEEEproof}

By setting the first derivative of~\eqref{eq:derive_Fxx} to zero, we obtain the optimal $x$-axis displacement $d^\star$, summarized in the following remark.
\vspace{-5pt}
\begin{remark}
The closed-form expression of the optimal displacement is
\vspace{-5pt}
\begin{align}
(d^\star)^2
= \frac{-(k_0^{2}\Delta_{\rm s}^{2}+3)
+ \sqrt{9k_0^{4}\Delta_{\rm s}^{4}+14k_0^{2}\Delta_{\rm s}^{2}+9}}
{4k_0^{2}}.
\label{eq:closed_form_d}
\end{align}
In the typical high-frequency regime of PASS where $k_0\Delta_{\rm s}\gg 1$, it further admits the approximation $d^\star \approx \Delta_{\rm s}/\sqrt{2}$.
\end{remark}
\vspace{-5pt}
\begin{IEEEproof}
Please refer to Appendix~C.
\end{IEEEproof}

%
By additionally incorporating the in-waveguide propagation loss, the resulting closed-form PA placement becomes
\begin{equation}
x_{\rm t}^\star = x_{\rm r}^\star = \mu_x - d^\star.
\end{equation}
This result highlights a non-intuitive but important design insight: an intuition is to place the PA as close to the target as possible, i.e., $x_{\rm t}=x_{\rm r}=\mu_x$ (equivalently, $d\to 0$). However, \eqref{eq:closed_form_d} shows that this intuition is generally incorrect: the optimal displacement $d^{\star}$ is a function of the effective height $\Delta_{\rm s}$, which is consistent with the observation in~\cite{ding2025pinchingisac}.
\vspace{-10pt}
\subsection{Pareto-Optimal Design}\label{sec:Pareto_sPA}
Building on the preceding C-C and S-C designs, we now characterize the fundamental communications-sensing tradeoff via the rate-profile scalarization framework. Specifically,
we define the \emph{sensing rate} as the inverse of the BCRB, i.e., $\mathsf{R}_{\rm s}(x_{\rm t},x_{\rm r})\triangleq\big[\mathsf{BCRB}(x_{\rm t},x_{\rm r})\big]^{-1}$.
For a given $\alpha$, a Pareto-optimal point is obtained by solving
\begin{align}\label{eq:PS_rate_profile_master_single}
\max_{x_{\rm t},x_{\rm r},\mathsf R}~ \mathsf R,
\quad \text{s.t.}~ {\mathsf R}_{\rm c}(x_{\rm t}) \ge \alpha\mathsf R,
\ \mathsf R_{\rm s}(x_{\rm t},x_{\rm r}) \ge (1-\alpha)\mathsf R.
\end{align}
Equivalently, introducing $\mathsf R$ yields the max-min form
\begin{align}\label{eq:slot_value_single}
\max_{x_{\rm t},x_{\rm r}}
\min\!\left\{
\frac{{\mathsf R}_{\rm c}(x_{\rm t})}{\alpha+\delta(\alpha,0)}, \ 
\frac{\mathsf R_{\rm s}(x_{\rm t},x_{\rm r})}{(1-\alpha)+\delta(\alpha,1)}
\right\},
\end{align}
where $\delta(\cdot)$ is a small perturbation used only to avoid numerical issues at $\alpha\in\{0,1\}$.
In the single-PA setting, $\mathsf R_{\rm c}(x_{\rm t})$ is independent of the Rx-PA location $x_{\rm r}$, which enables a simple two-step solution: (i) compute a sensing-driven best response for $x_{\rm r}$ given $x_{\rm t}$; (ii) perform a univariate search over $x_{\rm t}$.

\subsubsection{Sensing-Driven Rx-PA Placement}
In \eqref{eq:slot_value_single}, $x_{\rm r}$ appears only in the sensing term. Hence, for any fixed $x_{\rm t}$, the Rx-PA should be placed to maximize $\mathsf R_{\rm s}(x_{\rm t},x_{\rm r})$, regardless of $\alpha$:
\begin{align}\label{eq:rx_best_response}
 x_{\rm r}^\star(x_{\rm t})
 = \argmax_{x_{\rm r}}\, \mathsf R_{\rm s}(x_{\rm t},x_{\rm r})
 = \argmin_{x_{\rm r}}\, \mathsf{BCRB}(x_{\rm t},x_{\rm r}).
\end{align}
Here, we continue to adopt the sensing model used in the S-C design, where the target's $y$-coordinate is assumed known and only its $x$-coordinate is uncertain. Under this setting, the sensing-centric analysis yields an symmetric layout w.r.t. the prior mean $\mu_x$ as follows:
\begin{align}
 x_{\rm r}^\star(x_{\rm t}) \approx 2\mu_x - x_{\rm t}.
\end{align}

\subsubsection{Tx-PA Placement Optimization}
Substituting $x_{\rm r}^\star(x_{\rm t})$ into \eqref{eq:slot_value_single} reduces the original bivariate problem to a univariate optimization over $x_{\rm t}$.
Although $x_{\rm t}$ is continuous in principle, this one-dimensional search can be efficiently implemented via discretization.
Specifically, we discretize the feasible interval $[0,D_{\rm x}]$ into an $L$-point grid as follows:
\begin{align}\label{eq:waveguide_dis}
	\mathcal{G} \triangleq \Big\{0,\, \tfrac{D_{\rm x}}{L-1},\, \tfrac{2D_{\rm x}}{L-1},\,\ldots,\, D_{\rm x}\Big\}.
\end{align}
Then, for a given $\alpha$, a near-optimal Tx-PA placement is selected by evaluating the objective over $\mathcal{G}$ as follows:
\begin{align}\label{eq:tx_outer_search}
 x_{\rm t}^\star(\alpha)
 = \argmax_{x_{\rm t}\in\mathcal{G}}
 \min\!\left\{
 \frac{{\mathsf R}_{\rm c}(x_{\rm t})}{\alpha+\delta(\alpha,0)},
 \frac{\mathsf R_{\rm s}\big(x_{\rm t},x_{\rm r}^\star(x_{\rm t})\big)}{(1-\alpha)+\delta(\alpha,1)}
 \right\}.
\end{align}
\section{Multi-PA Case}
We further extend our study to the multi-PA setting to enhance practical relevance by explicitly incorporating additional performance constraints beyond the objective function~\cite{10217169, 10639496}. 
\vspace{-15pt}
\subsection{Sensing-Centric Design}
For the S-C design, the PA placement is optimized to enhance the sensing performance for the target, while ensuring a minimum multicast rate. The corresponding optimization problem is formulated as follows:
\begin{align}\label{eq:S_C_obj}
\min_{\mathbf X} \mathsf{BCRB}(\mathbf X), \
\text{s.t.}\ {\mathsf R}_{\rm c}\big({\bf x}_{\rm t}\big)\geq \Gamma_{\rm c},\ \mathbf x_{\rm t}\in\mathcal X_{\rm t},\ \mathbf x_{\rm r}\in\mathcal X_{\rm r}.
\end{align}
Obtaining the globally optimal solution would require an exhaustive search over all feasible PA placement, which is computationally prohibitive. To address this challenge, we develop a sequential element-wise AO algorithm, where each PA location is updated in turn while keeping the others fixed.

\subsubsection{Element-wise BFIM Reformulation}
Fix all PA coordinates except $x_{p,q}$.
For $\alpha\in\{x,y\}$, define
$S_{p,\alpha}\triangleq\sum_{n=1}^{N_p}\mathcal K_{\alpha}^{p,n}$ and
$g_p\triangleq \mathbf h^{\mathsf T}(\boldsymbol\zeta,\mathbf x_p)\boldsymbol\phi_p(\mathbf x_p)$.
When updating $x_{p,q}$, we split
\begin{align}
g_p = g_p^{(-q)} + g_p^{(q)}(x_{p,q}),\quad
S_{p,\alpha}=S_{p,\alpha}^{(-q)}+\mathcal K_{\alpha}^{p,q}(x_{p,q}),
\end{align}
where
$g_p^{(-q)}\triangleq\sum_{n\neq q} h_{p,n}(\boldsymbol\zeta;x_{p,n})[\boldsymbol\phi_p]_n(x_{p,n})$
and
$g_p^{(q)}(x_{p,q})\triangleq h_{p,q}(\boldsymbol\zeta;x_{p,q})[\boldsymbol\phi_p]_q(x_{p,q})$.

Accordingly, $f_{\alpha}(\boldsymbol\zeta,\mathbf X)$ in~\eqref{eq:f_alpha} admits the exact element-wise decomposition as follows:
\begin{align}\label{eq:split_fx}
f_\alpha(\boldsymbol\zeta,\mathbf X)
= C_{\alpha}^{-(p,q)}(\boldsymbol\zeta,\mathbf X_{p,q}^{-})
+ A_{\alpha}^{(p,q)}(\boldsymbol\zeta,x_{p,q}),
\end{align}
with
\begin{subequations}
\begin{align}
C_{\alpha}^{-(p,q)}
&\triangleq
- g_p^{(-q)} S_{p,\alpha}^{(-q)}
- g_{\bar p} S_{\bar p,\alpha},\\
A_{\alpha}^{(p,q)}
&\triangleq
- g_p^{(-q)} \mathcal K_{\alpha}^{p,q}
- g_p^{(q)} S_{p,\alpha}^{(-q)}
- g_p^{(q)} \mathcal K_{\alpha}^{p,q},
\end{align}
\end{subequations}
where $\bar p$ denotes the other waveguide index (i.e., $\bar{\rm t}={\rm r}$ and $\bar{\rm r}={\rm t}$). Note that $g_{\bar p}$ and $S_{\bar p,\alpha}$ depend only on $\mathbf x_{\bar p}$, hence they are constant when optimizing $x_{p,q}$.
By construction, $C_{\alpha}^{-(p,q)}$ is independent of $x_{p,q}$ and all the dependence on $x_{p,q}$ is captured by $A_{\alpha}^{(p,q)}$.
Substituting~\eqref{eq:split_fx} into~\eqref{eq:Fxx_def}, each OFIM entry admits an explicit element-wise dependence on the single variable $x_{p,q}$ through $A_{\alpha}^{(p,q)}(\boldsymbol\zeta,x_{p,q})$. After taking expectation w.r.t. $\boldsymbol\zeta$, the $(\alpha,\beta)$th OFIM entry can be written as follows:
\begin{align}
\big[\widetilde{\mathbf F}(x_{p,q})\big]_{\alpha,\beta}
&= \mathbb E_{\boldsymbol\zeta}\!\left[\Re\!\left\{f_{\alpha}^* f_{\beta}\right\}\right] \notag\\
&= [\mathbf \Phi]_{\alpha,\beta} + [\mathbf \Lambda(x_{p,q})]_{\alpha,\beta} + [\mathbf \Omega(x_{p,q})]_{\alpha,\beta},
\end{align}
where
\begin{subequations}
\begin{align}
[\mathbf \Phi]_{\alpha,\beta} &= \mathbb E\!\left[\Re\!\left\{(C_{\alpha}^{-(p,q)})^{*} C_{\beta}^{-(p,q)}\right\}\right],\\
[\mathbf \Lambda(x_{p,q})]_{\alpha,\beta} &= 2\Re\!\left\{\mathbb E\!\left[(C_{\alpha}^{-(p,q)})^{*} A_{\beta}^{(p,q)}\right]\right\},\\
[\mathbf \Omega(x_{p,q})]_{\alpha,\beta} &= \mathbb E\!\left[(A_{\alpha}^{(p,q)})^{*} A_{\beta}^{(p,q)}\right].
\end{align}
\end{subequations}
Accordingly, the BFIM in~\eqref{eq:BFIM} can be expressed as a function of $x_{p,q}$ as follows:
\begin{align}\label{eq:Fo_mm_elementwise}
\mathbf F(x_{p,q}) = \mathbf \Phi + \mathbf \Lambda(x_{p,q}) + \mathbf \Omega(x_{p,q}) + \mathbf F^{\rm p}_{\boldsymbol\zeta}.
\end{align}

\subsubsection{Element-wise Multicast Rate Reformulation}
We next derive an element-wise expression for the multicast rate.
When updating the $q$th Tx-PA location $x_{{\rm t},q}$ and keeping the remaining Tx-PAs fixed, the channel vector for the $k$th user can be decomposed as follows:
\begin{align}
{\bf h}_{k}^{\mathsf T}\big(\boldsymbol{\xi}_k,{\bf x}_{\rm t}\big)
= {\bf h}_{k}^{(-q)}\big(\boldsymbol{\xi}_k,{\bf x}^{-}_{{\rm t}, q}\big)
+ a_{k}(x_{{\rm t}, q}),
\end{align}
where ${\bf h}_{k}^{(-q)}$ collects the contributions from all fixed Tx-PAs $\{x_{{\rm t},j}\}_{j\neq q}$ and $a_k(x_{{\rm t},q})$ denotes the contribution of the $q$th element.
Let $D_k(x_{{\rm t}, q})\triangleq\lVert{\boldsymbol{\xi}}_k-{\bm\psi}_{{\rm t},q}\rVert$ be the distance between the $q$th Tx-PA and the $k$th user. Then,
\begin{align}
{\bf h}_{k}^{(-q)}\big(\boldsymbol{\xi}_k,{\bf x}_{\rm t}\big)
&= \sum_{j \neq q}^{N_{\rm t}}
\frac{\sqrt{\eta}{\rm e}^{-\,{\rm j}\,\left(k_0 D_{k}(x_{{\rm t}, j})+k_{\rm g} x_{{\rm t}, j}\right)}}{D_{k}(x_{{\rm t}, j})},
\end{align}
and
\vspace{-8pt}
\begin{align}
a_{k}(x_{{\rm t}, q})
&= \frac{\sqrt{\eta}{\rm e}^{-{\rm j}\left(k_0D_{k}(x_{{\rm t}, q})+k_{\rm g}x_{{\rm t}, q}\right)}}{D_{k}(x_{{\rm t}, q})}.
\end{align}
Consequently, the effective channel gain admits the expansion
\begin{multline}
\left|
{\bf h}_{k}^{\mathsf T}\big(\boldsymbol{\xi}_k,{\bf x}_{\rm t}\big)\boldsymbol{\phi}_{\rm t}\big({\bf x}_{\rm t}\big)
\right|^{2}
= \left|{\bf h}_{k}^{(-q)}\boldsymbol{\phi}_{\rm t}\right|^{2}
+ \left|a_{k}(x_{{\rm t}, q})\boldsymbol{\phi}_{\rm t}\right|^{2} \\
+2\Re\!\left\{
\left({\bf h}_{k}^{(-q)}\boldsymbol{\phi}_{\rm t}\right)^{*}
\left(a_{k}(x_{{\rm t}, q})\boldsymbol{\phi}_{\rm t}\right)
\right\}.
\label{eq:h_expand_my}
\end{multline}
Substituting~\eqref{eq:h_expand_my} into~\eqref{eq:SNR for com}, the received SNR of user $k$ becomes an explicit function of $x_{{\rm t},q}$, i.e., 
\begin{align}\label{eq:gamma_element}
\gamma_k\big(x_{{\rm t}, q}\big)
= \frac{P_{\rm t}}{\sigma_k^2}
\Big(C_k + Q_k\big(x_{{\rm t}, q}\big) + L_k\big(x_{{\rm t}, q}\big)\Big),
\end{align}
where
\vspace{-8pt}
\begin{align}
&C_k = \left|{\bf h}_{k}^{(-q)} \boldsymbol{\phi}_{\rm t}\right|^{2}, \quad
Q_k\big(x_{{\rm t}, q}\big) = \left|a_{k}\big(x_{{\rm t}, q}\big)\boldsymbol{\phi}_{\rm t}\right|^{2}, \notag\\
&L_k\big(x_{{\rm t}, q}\big) = 2\Re\!\left\{
\left({\bf h}_{k}^{(-q)}\boldsymbol{\phi}_{\rm t}\right)^{*}
\left(a_{k}\big(x_{{\rm t}, q}\big)\boldsymbol{\phi}_{\rm t}\right)
\right\}.
\end{align}
\subsubsection{Sequential Element-Wise AO Procedure}
We now explain how the element-wise BFIM reformulation in \eqref{eq:Fo_mm_elementwise} and the element-wise SNR reformulation in \eqref{eq:gamma_element} enable a sequential element-wise AO procedure.
Following the discretization process in Section~\ref{sec:Pareto_sPA}, we first quantize the feasible interval $[0,D_{\rm x}]$ into an $L$-point grid as~\eqref{eq:waveguide_dis}, so that each element-wise update can be solved by a one-dimensional grid search.

At a given iteration, we update one PA coordinate $x_{p,q}$ while keeping all other PA coordinates fixed.
To preserve the minimum-spacing constraint after each update, the candidate set of $x_{p,q}$ is restricted to a local feasible set as follows:
\begin{align}
\mathcal{S}_{p,q}\!\!\triangleq\!
\Big\{\!x\in\mathcal{G}\,\big|\,|x-x_{p,j}|\ge \Delta_{\min},\forall j\in\{q\!-\!1,q\!+\!1\}\Big\},
\end{align}
where $x_{p,q-1}$ and $x_{p,q+1}$ are the fixed neighboring PAs on the same waveguide. For any candidate point $x\in\mathcal{S}_{p,q}$, the BFIM can be evaluated efficiently using the element-wise form \eqref{eq:Fo_mm_elementwise}, i.e., by treating $\mathbf F(x)$ as a function of the single scalar variable $x_{p,q}=x$ and keeping the remaining coordinates fixed.
Similarly, when $p=\mathrm{t}$, the multicast SNRs can be evaluated via \eqref{eq:gamma_element} for all users $k\in\mathcal{K}$ as explicit functions of the same scalar variable $x$.
\paragraph*{Tx-PA update}
Updating a Tx-PA affects both sensing and communications. Hence, the $q$th Tx-PA is updated by solving
\begin{align}\label{eq:sub_problem_Tx}
x_{{\rm t},q}^{\star}\!=\!\argmin_{x\in\mathcal{S}_{{\rm t},q}}\mathsf{tr}\!\left\{\mathbf F^{-1}(x)\right\}, \
\text{s.t. } \gamma_k(x)\ge \gamma_{\rm c},\forall k\in\mathcal{K},
\end{align}
where $\gamma_{\rm c}\triangleq 2^{\Gamma_{\rm c}}-1$.
Notably, the multicast constraint is enforced {during} the one-dimensional search: we only admit candidate points $x$ that satisfy $\gamma_k(x)\ge \gamma_{\rm c}$ for all $k$.
Therefore, if the initial Tx-PA layout $\mathbf x_{\rm t}^{(0)}$ is feasible, i.e., ${\mathsf R}_{\rm c}(\mathbf x_{\rm t}^{(0)})\ge\Gamma_{\rm c}$, then every subsequent Tx-PA update keeps the iterate feasible, and the multicast requirement is guaranteed throughout the AO iterations.
\begin{algorithm}[t]
\caption{Element-wise AO Algorithm for the S-C Design}
\label{alg:elementwise}
\begin{algorithmic}[1]
\REQUIRE  Initial layout $\mathbf X^{(0)}$ and set $l\gets 0$.
\REPEAT
    \FOR{$q\in\mathcal N_{\rm t}$}
        \STATE Construct $\mathcal S_{{\rm t},q}$ and update $x_{{\rm t},q}$ by solving~\eqref{eq:sub_problem_Tx}.
    \ENDFOR
    \FOR{$q\in\mathcal N_{\rm r}$}
        \STATE Construct $\mathcal S_{{\rm r},q}$ and update $x_{{\rm r},q}$ by solving~\eqref{eq:sub_problem_Rx}.
    \ENDFOR
    \STATE $l\gets l+1$.
\UNTIL{convergence or $l=I_{\rm iter}$}
\STATE \textbf{Output:} $\mathbf X^{(l)}$.
\end{algorithmic}
\end{algorithm}
\paragraph*{Rx-PA update}
Updating an Rx-PA does not affect multicast communications. Thus, the $q$th Rx-PA is updated by solving the unconstrained one-dimensional search as follows:
\begin{align}\label{eq:sub_problem_Rx}
x_{{\rm r},q}^{\star}=\argmin_{x\in\mathcal{S}_{{\rm r},q}}~\mathsf{tr}\!\left\{\mathbf F^{-1}(x)\right\}.
\end{align}

By cyclically sweeping $q$ over all Tx- and Rx-PAs and performing the above one-dimensional searches, we obtain a sequential element-wise AO algorithm, as detailed in Algorithm 1.
Since each subproblem is solved over a finite set and each accepted update does not increase the objective, the procedure converges in a finite number of iterations.
Moreover, the computational complexity is $\mathcal{O}\big(I_{\rm iter}(N_{\rm t}+N_{\rm r})\,L\,(K+T^{2})\big)$, where $I_{\rm iter}$ denotes the number of AO iteration.
\begin{algorithm}[t]
\caption{AL-Based AO Algorithm for the C-C Design}
\label{alg:PS_AL_short}
\begin{algorithmic}[1]
\REQUIRE Initial layout $\mathbf X^{(0)}$, multiplier $\lambda^{(0)}\!\ge0$, 
penalty $\rho^{(0)}\!>\!0$, growth factor $\beta\!>\!1$, 
tolerances $\varepsilon_{\rm in}, \varepsilon_{\rm out}$, set $s\gets0$, $l \gets 0$.
\REPEAT 
  \REPEAT  
    \STATE Update $\mathbf X$ by solving~\eqref{eq:tx_scalar_update} and~\eqref{eq:rx_scalar_update} via Algorithm 1
    \STATE Update Lagrange multiplier 
           $\lambda^{(l+1)}$ by~\eqref{eq:PS_lambda_update_only}
    \STATE $l \gets l+1$
  \UNTIL{convergence \textbf{or} $l=I_{\max}$}
  \STATE Update penalty parameter via~\eqref{eq:penalty_update}
  \STATE $s \gets s+1$

\UNTIL{$|\mathsf R_{\rm c}(\mathbf x_{\rm t}^{(l)}) - 
         \mathsf R_{\rm c}(\mathbf x_{\rm t}^{(l-1)})|
         \le \varepsilon_{\rm out}$}
\STATE \textbf{Output:} $\mathbf X^{(l)}$.
\end{algorithmic}
\end{algorithm}
\vspace{-10pt}
\subsection{Communications-Centric Design}
For the C-C design, the PA placement is optimized to maximize the multicast rate, subject to maintaining the sensing accuracy above a desired threshold. This leads to the following optimization formulation:
\begin{align}\label{eq:CC_PS_only}
\max_{{\bf x}_{\rm t}}\ \mathsf R_{\rm c}\big({\bf x}_{\rm t}\big), \
{\rm s.t.}\ \mathsf{BCRB}\big({\bf X}\big)\le\Gamma_{\rm s},  \mathbf x_{\rm t}\in\mathcal X_{\rm t}, \mathbf x_{\rm r}\in\mathcal X_{\rm r}.
\end{align}
A key difficulty in~\eqref{eq:CC_PS_only} is that the receive-side variables $\mathbf x_{\rm r}$ do not appear in the objective $\mathsf R_{\rm c}(\mathbf x_{\rm t})$. Therefore, directly applying the element-wise AO method from the S-C design would leave $\mathbf x_{\rm r}$ unchanged.
To enable joint transceiver updates, we adopt an augmented Lagrangian (AL) reformulation, which embeds the sensing constraint into the objective and thus induces an explicit dependence on $\mathbf x_{\rm r}$.
As a result, the same element-wise update mechanism developed in the S-C design can be reused as the inner AO step.

Specifically, define the constraint violation as $\Delta_{\rm B}({\bf X})\triangleq\mathsf{BCRB}({\bf X})-\Gamma_{\rm s}$.
Let $\lambda\ge0$ denote the Lagrange multiplier and $\rho>0$ denote the penalty parameter.
The AL objective is
\begin{align}\label{eq:AL_PS_only}
\mathcal L_\rho\big({\bf X},\lambda\big)
\triangleq \mathsf R_{\rm c}\big({\bf x}_{\rm t}\big)
- \lambda\,\Delta_{\rm B}\big({\bf X}\big)
- \frac{\rho}{2}\Big[\Delta_{\rm B}\big({\bf X}\big)\Big]_+^{2}.
\end{align}
With~\eqref{eq:AL_PS_only}, we alternate between (i) updating $\mathbf X$ via sequential element-wise AO and (ii) updating $(\lambda,\rho)$ to enforce feasibility of the sensing constraint.
\subsubsection{Optimization of $\mathbf{X}$}
At outer iteration $l$, given the current multiplier $\lambda^{(l)}$ and penalty parameter $\rho^{(l)}$, the PA layout $\mathbf X$ is updated by alternating between ${\bf x}_{\rm t}$ and ${\bf x}_{\rm r}$.

For the transmit side, when optimizing a specific coordinate $x_{{\rm t},q}$ subject to its feasible set $\mathcal S_{{\rm t},q}$, the corresponding univariate objective is defined as follows:
\begin{align}
\phi^{(l)}_{{\rm t},q}(x_{{\rm t}, q})\!\triangleq \!
\mathsf R_{\rm c}(x_{{\rm t}, q})\!- \!\lambda^{(l)}\!\Delta_{\rm B}(x_{{\rm t}, q}) \!- \!\frac{\rho^{(l)}}{2}\Big[\Delta_{\rm B}(x_{{\rm t}, q})\Big]_+^2,
\label{eq:tx_scalar_obj}
\vspace{-15pt}
\end{align}
and the $q$th Tx-PA position is then updated by solving
\begin{align}
x_{{\rm t},q}^{(l+1)} 
= \argmax_{x_{{\rm t},q}\in\mathcal S_{{\rm t},q}}
\phi^{(l)}_{{\rm t},q}(x_{{\rm t}, q}),
\qquad q\in \mathcal N_{\rm t}.
\label{eq:tx_scalar_update}
\end{align}
Both $\mathsf R_{\rm c}(x_{{\rm t},q})$ and $\Delta_{\rm B}(x_{p,q})$ in~\eqref{eq:tx_scalar_obj} admit closed-form element-wise dependencies through the SNR decomposition in~\eqref{eq:gamma_element} and the BFIM structure in~\eqref{eq:Fo_mm_elementwise}. 

After obtaining the updated $\mathbf x_{\rm t}^{(l+1)}$, ${\bf x}_{\rm r}$ is refined while keeping $\mathbf x_{\rm t}$ fixed. Similarly, the univariate objective is defined as follows:
\vspace{-5pt}\begin{align}
\phi^{(l)}_{{\rm r},q}(x_{{\rm r}, q})
&\triangleq 
- \lambda^{(l)}\,\Delta_{\rm B}(x_{{\rm r}, q})
- \frac{\rho^{(l)}}{2}\Big[\Delta_{\rm B}(x_{{\rm r}, q})\Big]_+^2,
\label{eq:rx_scalar_obj}
\end{align}
which captures the contribution of the $q$th Rx-PA to the AL function. The update of $x_{{\rm r}, q}$ is then obtained by solving the following problem:
\begin{align}
x_{{\rm r},q}^{(l+1)} = \argmax_{x_{{\rm r}, q}\in\mathcal S_{{\rm r}, q}}
\phi^{(l)}_{{\rm r},q}(x_{{\rm r}, q}),
\qquad q\in \mathcal N_{\rm r}.
\label{eq:rx_scalar_update}
\end{align}
By applying this element-wise update sequentially to all receive PAs, we obtain the updated $\mathbf x_{\rm r}^{(l+1)}$.

\subsubsection{Optimization of Multiplier and Penalty}
After completing the element-wise AO refinement of the PA layout 
${\bf X}$, the Lagrange multiplier is updated via the projected ascent rule as follows:
\begin{align}\label{eq:PS_lambda_update_only}
\lambda^{(l+1)}
= \Big[\lambda^{(l)}+\rho^{(l)}\Delta_{\rm B}\big({\bf X}^{(l+1)}\big)\Big]_+.
\end{align}
The penalty parameter is adjusted according to a monotonic schedule as follows:
\begin{align}\label{eq:penalty_update}
\rho^{(l+1)}=
\begin{cases}
\beta\rho^{(l)}, & \text{if } \big[\Delta_{\rm B}({\bf X}^{(l+1)})\big]_+ > \varepsilon_{\rm feas},\\[0.2em]
\rho^{(l)}, & \text{otherwise},
\end{cases}
\end{align}
where $\beta>1$ is a scaling factor and $\varepsilon_{\rm feas}>0$ denotes the feasibility tolerance. The outer iteration terminates once both the rate improvement $\big|\mathsf R_{\rm c}({\bf x}_{\rm t}^{(l+1)})-\mathsf R_{\rm c}({\bf x}_{\rm t}^{(l)})\big|$ and the constraint violation $\big[\Delta_{\rm B}({\bf X}^{(l+1)})\big]_+$ fall below thresholds.

Algorithm 2 summarizes the complete procedure. The computational complexity scales as $\mathcal{O}(I_{\rm out} I_{\rm in} (N_{\rm t}+N_{\rm r}) L)$, where $I_{\rm out}$ and $I_{\rm in}$ denote the iteration number. Since each element-wise AO step monotonically improves the AL function over a finite discrete domain, the inner iterations converge in finitely many steps. The adaptive penalty update further enforces feasibility, which ensures convergence to a stationary solution of the AL-reformulated problem.
\vspace{-15pt}
\subsection{Pareto-Optimal Design}
\begin{algorithm}[t]
\caption{Element-wise AO Algorithm for the Rate-Profile Based Pareto-Optimal Design}
\label{alg:PS_pareto_scan_AO}
\begin{algorithmic}[1]
\REQUIRE Profile set $\{\alpha_1,\!\ldots,\!\alpha_{N_\alpha}\!\}\!\!\subset\!(0,1)$, initial layout $\mathbf X^{(0)}$.
\FOR{$n=1,2,\ldots,N_\alpha$}
  \STATE Set $\alpha\gets \alpha_n$, $l\gets 0$, and initialize $\mathbf X^{(0)}(\alpha)\gets \mathbf X^{(0)}$.
  \REPEAT
    \FOR{$q\in\mathcal N_{\rm t}$}
      \STATE Construct $\mathcal S_{{\rm t},q}$ and update $x_{{\rm t},q}$ by solving~\eqref{eq:Pareto_Tx_Update_refined}.
    \ENDFOR
    \FOR{$q\in\mathcal N_{\rm r}$}
      \STATE Construct $\mathcal S_{{\rm r},q}$ and update $x_{{\rm r},q}$ by solving~\eqref{eq:Pareto_Rx_Update_refined}.
    \ENDFOR
    \STATE $l\gets l+1$.
  \UNTIL{convergence \textbf{or} $l=I_{\rm iter}$}
  \STATE Obtain $\mathbf X^\star(\alpha)\gets \mathbf X^{(l)}$ and compute $r^\star(\alpha)$ in~\eqref{eq:slot_value}.
  \STATE Record the Pareto point $\big(\mathsf{BCRB}(\mathbf X^\star(\alpha)),\ \mathsf R_{\rm c}(\mathbf x_{\rm t}^\star(\alpha))\big)$.
\ENDFOR
\STATE \textbf{Output:}
$\{(\mathsf{BCRB}(\mathbf X^\star(\alpha_n)),\,\mathsf R_{\rm c}(\mathbf x_{\rm t}^\star(\alpha_n)))\}_{n=1}^{N_\alpha}$.
\end{algorithmic}
\end{algorithm}

Following the rate-profile formulation in Section~\ref{sec:Pareto_sPA}, we characterize the sensing-communications tradeoff in the multi-PA setting by optimizing the PA locations for a given profile parameter $\alpha\in[0,1]$.
Define the sensing rate as $\mathsf{R}_{\rm s}(\mathbf X)\triangleq[\mathsf{BCRB}(\mathbf X)]^{-1}$.
The corresponding scalarized problem is
\begin{align}\label{eq:PS_rate_profile_master}
\max_{\mathbf X\in\mathcal X,\ \mathsf R}\ \ \mathsf R, \quad  \text{s.t.} \ \mathsf R_{\rm c}(\mathbf x_{\rm t}) \ge \alpha \mathsf R, \ \mathsf R_{\rm s}(\mathbf X) \ge (1-\alpha)\mathsf R.
\end{align}
Equivalently, we maximize the rate-profile utility
\begin{align}\label{eq:slot_value}
r^\star(\alpha)
\triangleq 
\max_{\mathbf X\in\mathcal X}
\min\!\left\{
\frac{\mathsf R_{\rm c}(\mathbf x_{\rm t})}{\alpha+\delta(\alpha,0)},
\frac{\mathsf R_{\rm s}(\mathbf X)}{(1-\alpha)+\delta(\alpha,1)}
\right\}.
\end{align}

The above problem is solved by a sequential element-wise AO procedure, where each PA is updated via a one-dimensional search over its local feasible set while keeping all other PAs fixed.
In particular, the transmit-side update accounts for both communications and sensing through the rate-profile objective, whereas the receive-side update only affects sensing.

\subsubsection{Tx-PA Update}
For each $q\in\mathcal N_{\rm t}$, the $q$th Tx-PA is updated by solving
\begin{align}\label{eq:Pareto_Tx_Update_refined}
x_{{\rm t},q}^{\star} \!\!= \!
\argmax_{x\in\mathcal S_{{\rm t},q}}
\min\!\left\{
\frac{\mathsf R_{\rm c}(x)}{\alpha+\delta(\alpha,0)},
\frac{\mathsf R_{\rm s}(x)}{(1-\alpha)+\delta(\alpha,1)}
\right\}.
\end{align}

\subsubsection{Rx-PA Update}
For each $q\in\mathcal N_{\rm r}$, the $q$th Rx-PA is updated by maximizing the sensing rate, i.e.,
\begin{align}\label{eq:Pareto_Rx_Update_refined}
x_{{\rm r},q}^{\star} =
\argmax_{x\in\mathcal S_{{\rm r},q}}
\mathsf R_{\rm s}(x) =
\argmin_{x\in\mathcal S_{{\rm r},q}}
\mathsf{BCRB}(x).
\end{align}

By cyclically sweeping over all Tx- and Rx-PAs and repeating the above updates until convergence, we obtain a locally optimal solution for the chosen $\alpha$.
Since each update is selected from a finite discrete set and does not decrease the rate-profile utility in~\eqref{eq:slot_value}, the algorithm converges in a finite number of iterations.
The computational complexity is $\mathcal{O}\big(I_{\rm iter}(N_{\rm t}+N_{\rm r})\,L\,(K+T^{2})\big)$, where $I_{\rm iter}$ denotes the number of AO iterations.

\vspace{-10pt}
\section{Numerical Results}
This section presents numerical results to demonstrate the advantages of PASS and to validates the effectiveness of our proposed optimization algorithms. Unless stated otherwise, the following setup is adopted.

We consider two dielectric waveguides that are parallel to the $x$-axis and located at $y_{\rm t}=3$~m and $y_{\rm r}=-3$~m, respectively. Following the discretized waveguide model in the previous sections, each waveguide is uniformly discretized into $L=1000$ candidate PA locations.
The service region is specified by $h=5$~m, $D_x=10$~m, and $D_y=6$~m. In each realization, the $K=4$ multicast users are independently and uniformly distributed over $x\in[0,D_x]$ and $y\in[-D_y/2,\,D_y/2]$. A single sensing target is considered, 
the mean $\boldsymbol\mu$ is independently and uniformly drawn from the same rectangular region, while the prior variances are independently generated as $\sigma_x^2\sim\mathcal U(0,1)$ and $\sigma_y^2\sim\mathcal U(0,2)$.

The carrier frequency is set to $f_{\rm c} = 28$~GHz,
 the noise power at both the users and the target is $\sigma_k^2=\sigma_{\rm s}^2=-90$~dBm. For the waveguide propagation, the guided wavelength is set to $\lambda_{\rm g}=\lambda/1.44$~\cite{10945421}. The minimum SNR requirement is $\gamma_{\rm c}=12$~dB, and the maximum BCRB requirement is $\Gamma_{\rm s}=0.1$.
For the proposed AL-based AO algorithm, the penalty factor is initialized as $\rho=10^{-4}$ and updated with scaling parameter $\beta=2$, and the convergence tolerance is set to $10^{-3}$. All statistical results are averaged over $1000$ random realizations.

For performance comparison, we evaluate the proposed design against the following benchmark schemes: 1) {Fixed-array with analog beamforming (BF):} Fixed-location uniform linear arrays (ULAs) with $N=6$ elements are deployed at $(0,y_{\rm t},h)$ and $(0,y_{\rm r},h)$. A single RF chain is employed, and the analog BF satisfy the constant-modulus constraint. 2) {Fixed-array with digital BF:} The ULAs are deployed at the same locations, but fully digital precoding is assumed, which is based on maximum-ratio transmission/combining (MRT/MRC) towards the multicast users and the sensing target. This architecture serves as an upper bound for fixed-array baselines. 3) {Random PA:} The $N$ PAs are randomly placed over the waveguide, subject to the minimum inter-element spacing constraint. 4) {Centered PA:} The $N$ PAs are placed contiguously at the waveguide center with the minimum inter-element spacing $\Delta_{\min}$.

\vspace{-5pt}
\subsection{Single-PA Case}
\subsubsection{Multicast Communication Performance}
Fig. \ref{fig:Rate_singlePA} illustrates the achievable multicast rate versus the side length $D_{\rm x}$. As the service region expands, the multicast rate exhibits a monotonic decline across all evaluated schemes. This trend is physically attributed to the severe path-loss attenuation in larger areas, which diminishes the ability of a single Tx-PA to simultaneously maintain robust LoS links to all distributed users. Nevertheless, the proposed C-C design consistently yields the highest multicast rate, which significantly outperforms both the random PA and centered PA layouts. This superiority validates the effectiveness of our closed form solution for the C-C design.
Furthermore, it is worth noting that the performance degradation of the C-C design is less pronounced compared to other schemes. This also demonstrates the robustness of the proposed C-C design in spatially extended scenarios.
\begin{figure}[!t]
\centering
\includegraphics[height=0.26\textwidth]{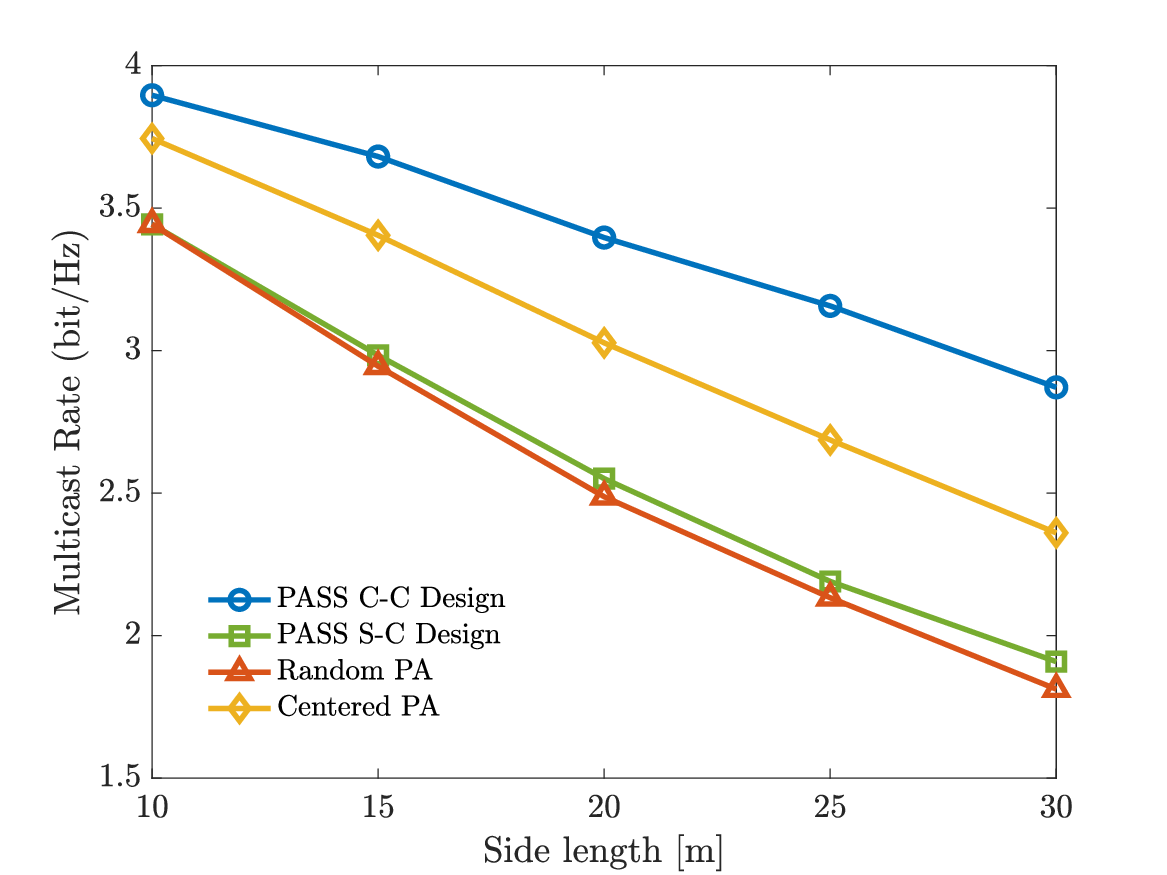}
\caption{Multicast rate versus the side length $D_{\rm x}$ for single-PA case.}\label{fig:Rate_singlePA}
\vspace{-5pt}
\end{figure}
\subsubsection{Sensing Performance}
\begin{figure}[!t]
\centering
\includegraphics[height=0.26\textwidth]{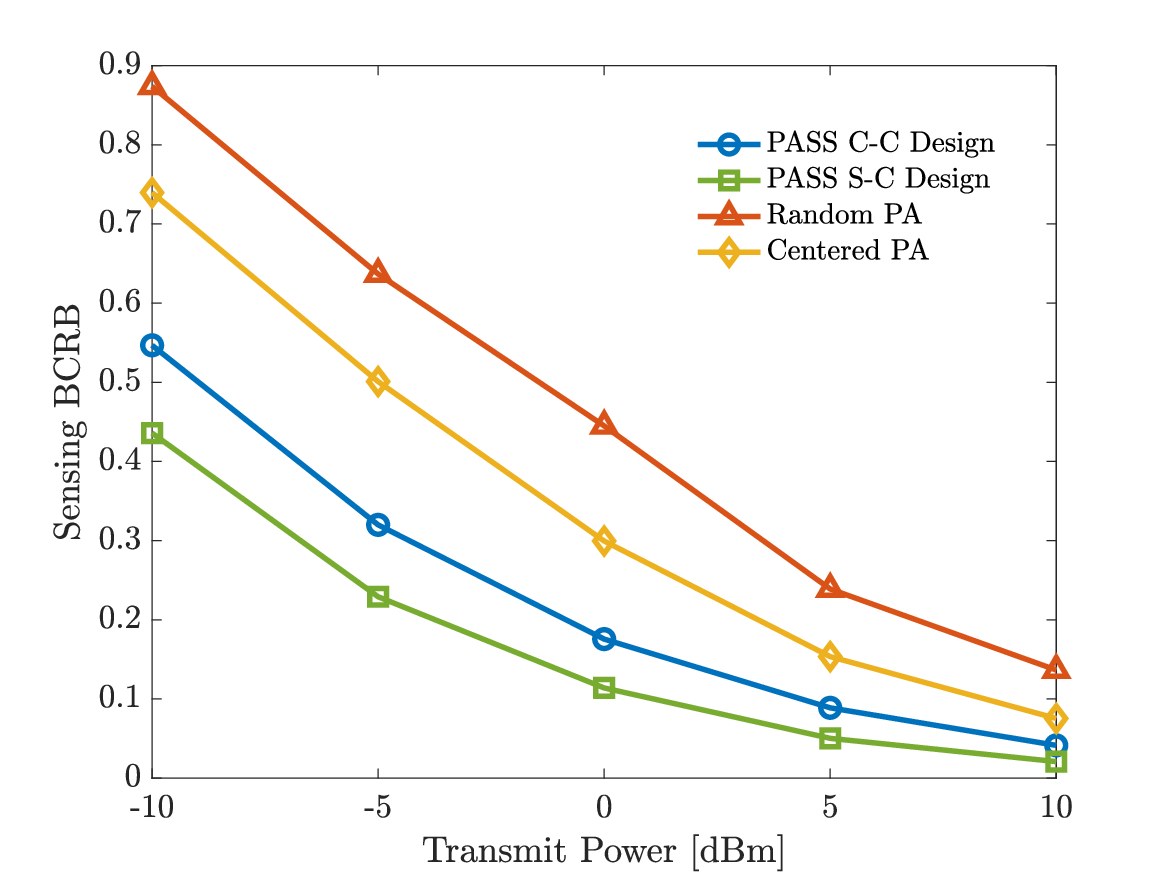}
\caption{BCRB versus the transmit power for single-PA case.}\label{fig:Pt_BCRB}
\vspace{-5pt}
\end{figure}

\begin{figure}[!t]
\centering
\includegraphics[height=0.26\textwidth]{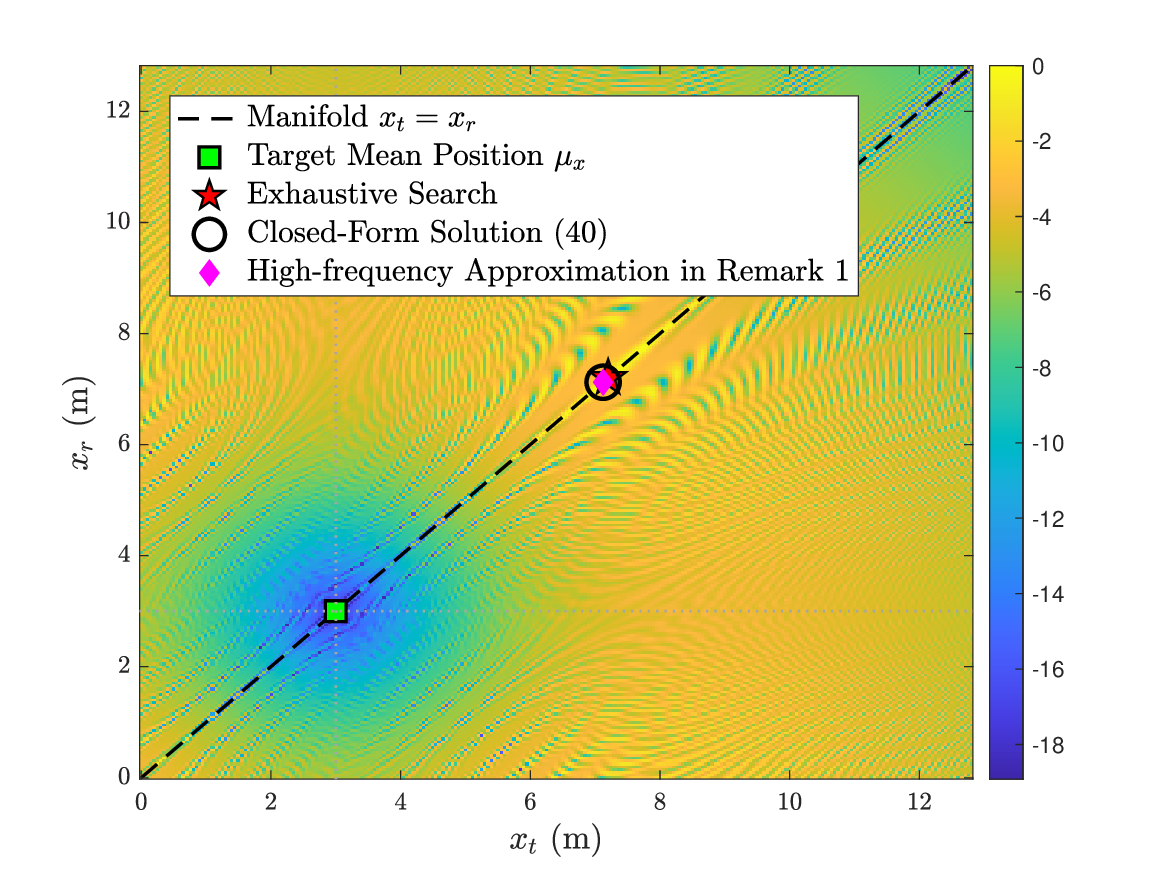}
\caption{The optimal Tx/Rx-PA placement for maximizing the OFIM (dB) under the proposed closed-form solution and the exhaustive search benchmark.}
\label{fig:optimal_placement}
\vspace{-5pt}
\end{figure}
Fig. \ref{fig:Pt_BCRB} depicts the sensing BCRB performance as a function of the transmit power. The results reveals several insightful observations. First, a monotonic decrease in BCRB is observed for all schemes, which is consistent with the theoretical improvement in estimation accuracy at higher SNRs. Second, the proposed S-C design consistently yields the lowest BCRB across the entire power regime. This performance advantage serves as a benchmark and validates the optimality of the derived closed-form solution for S-C design.
Third, it is noteworthy that the C-C design yields superior sensing performance compared to the random PA benchmark, even though its PA placement is optimized solely for the multicast rate without explicitly accounting for sensing metrics. This performance gain is attributed to the specific design strategy employed: since the Rx-PA is independent of the multicast rate, we deliberately adopted the transceiver symmetric layout for the Rx-PA deployment in the C-C design. Consequently, compared to the random PA scheme where both the transceiver are placed stochastically, the C-C design benefits significantly from this structural symmetry. This observation not only demonstrates the robustness of the C-C design but also explicitly validates the effectiveness of the proposed symmetric layout strategy in Proposition 1.

To rigorously validate the global optimality of the derived closed-form solution for the C-C design, Fig. \ref{fig:optimal_placement} plots the spatial distribution of the OFIM w.r.t the Tx- and Rx-PA placement. The visualization reveals the highly non-convex and oscillating nature of the sensing objective function, characterized by numerous local optima arising from the rapid signal phase variations. Despite this complex optimization landscape, the proposed closed-form solution in \eqref{eq:closed_form_d} aligns perfectly with the global optimal position obtained via exhaustive search. Furthermore, the high-frequency approximation solution in Remark~1 also lies in close proximity to the global optimum, which demonstrates the tightness of the derived approximation expression. The results confirm that the derived close-form expressions can efficiently pinpoint the optimal PA locations for maximizing sensing performance without suffering from the prohibitive computational complexity of grid search or the risk of trapping in local optima.

\subsubsection{Pareto-Optimal Performance}
\begin{figure}[!t]
\centering
\includegraphics[height=0.26\textwidth]{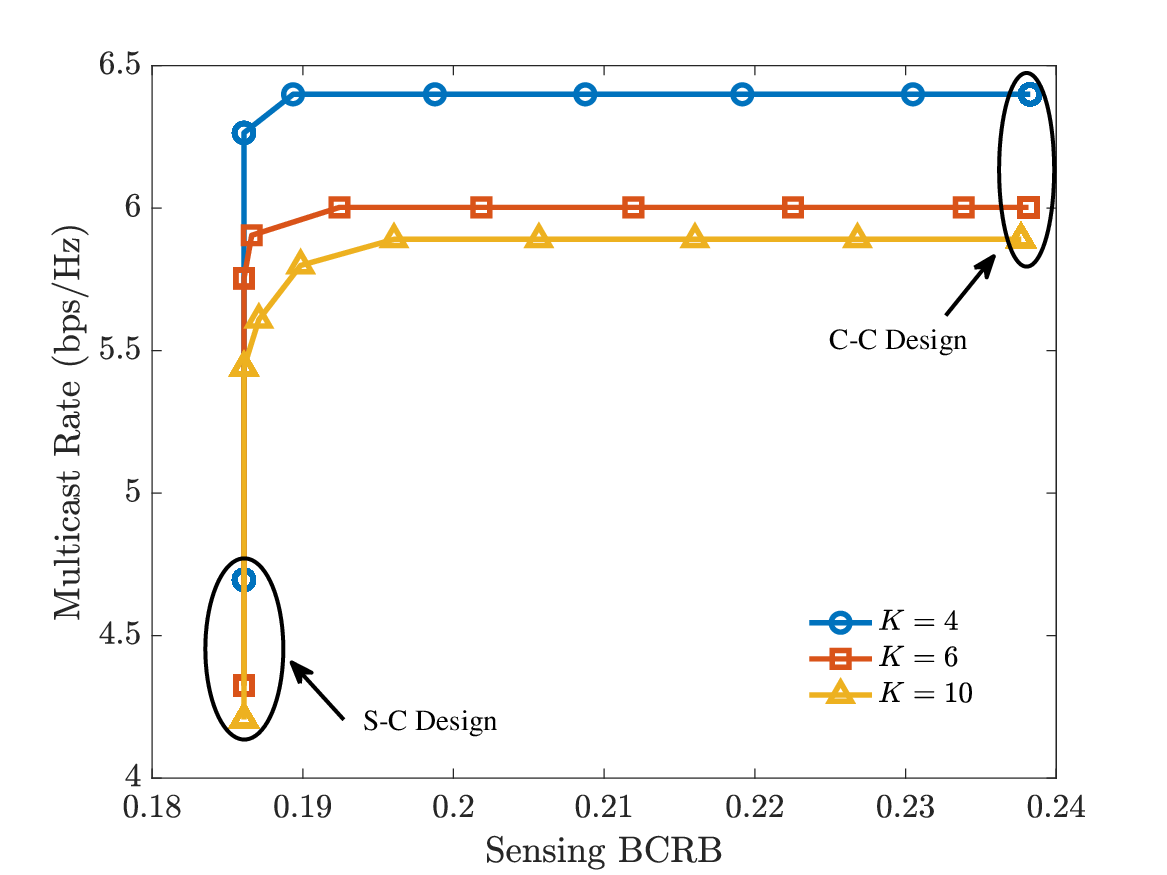}
\caption{Patero-optimal design for single-PA case.}
\label{fig:pareto_singlePA}
\vspace{-5pt}
\end{figure}
Fig. \ref{fig:pareto_singlePA} illustrates the Pareto frontier of the achievable multicast rate versus the sensing BCRB. The results reveal the fundamental tradeoff limits and robust performance characteristics of the proposed architecture.
First, the extended saturation region of the multicast rate indicates the spatial stationarity of the communication channel. In the single-PA case, the multicast performance is predominantly limited by the large-scale path loss of the worst-case user. Therefore, the objective function remains quasi-constant over local spatial displacements. This implies that the communication rate is robust to minor variations in the PA position. 
Second, the vertical region demonstrates the high robustness of the sensing objective near its global optimum. Although the S-C design yields the optimal PA placement for sensing, the BCRB function is locally flat around this optimal point. Consequently, the system can deviate slightly from the strict S-C solution to move closer to the user cluster. This slight deviation incurs a negligible degradation in BCRB but translates into a significant improvement in the multicast rate.

\subsection{Multi-PA Case}
\subsubsection{Convergence Analysis of the Proposed Algorithms}
\begin{figure}[!t]
    \centering
    \includegraphics[height=0.5\linewidth]{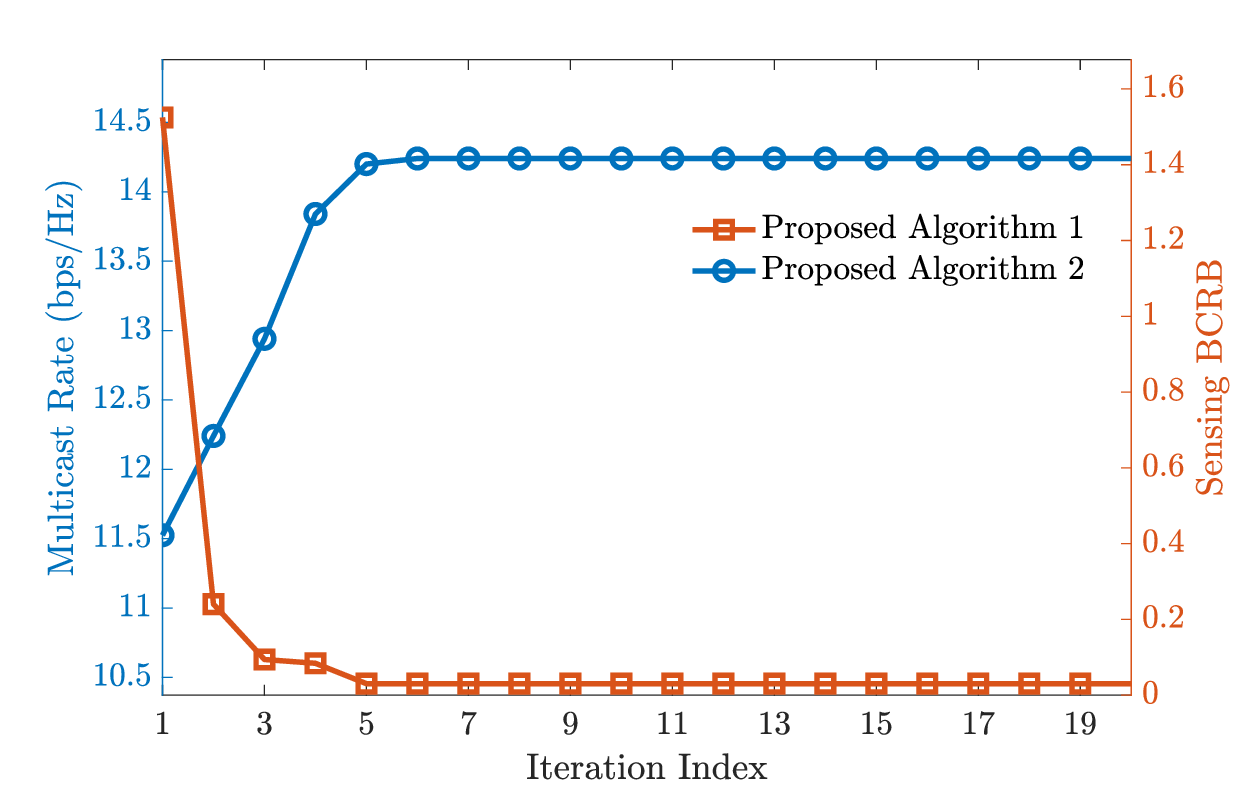}
    \caption{Convergence behavior of the proposed Algorithms. The left and right vertical axes represent the multicast rate (blue curve) obtained by Algorithm 2 and the sensing BCRB (orange curve) optimized by Algorithm 1, respectively.}
    \label{fig:Convergence} 
\vspace{-5pt}
\end{figure}
Fig. \ref{fig:Convergence} illustrates the convergence behavior of the proposed algorithms. First, regarding Algorithm 1, the objective function exhibits a rapid and monotonic decline, which stabilizes within a few iterations. This demonstrates the robustness of the proposed AO strategy. Despite the highly non-convex and oscillating sensing landscape as revealed in Fig. \ref{fig:optimal_placement}, the algorithm effectively navigates the local fluctuations to identify a high-quality solution satisfying the communication constraints. Second, regarding Algorithm 2, the multicast rate also shows a fast convergence trajectory. This efficiency is attributed to the adopted AL framework. By incorporating the distinct constraints into the objective function via penalty terms, the AL method transforms the original constrained problem into a sequence of unconstrained sub-problems. This effectively smooths the optimization landscape and avoids the computational oscillation often encountered near the boundaries of the feasible set, thereby accelerating the convergence speed.

\begin{figure}[!t]
\centering
\includegraphics[height=0.26\textwidth]{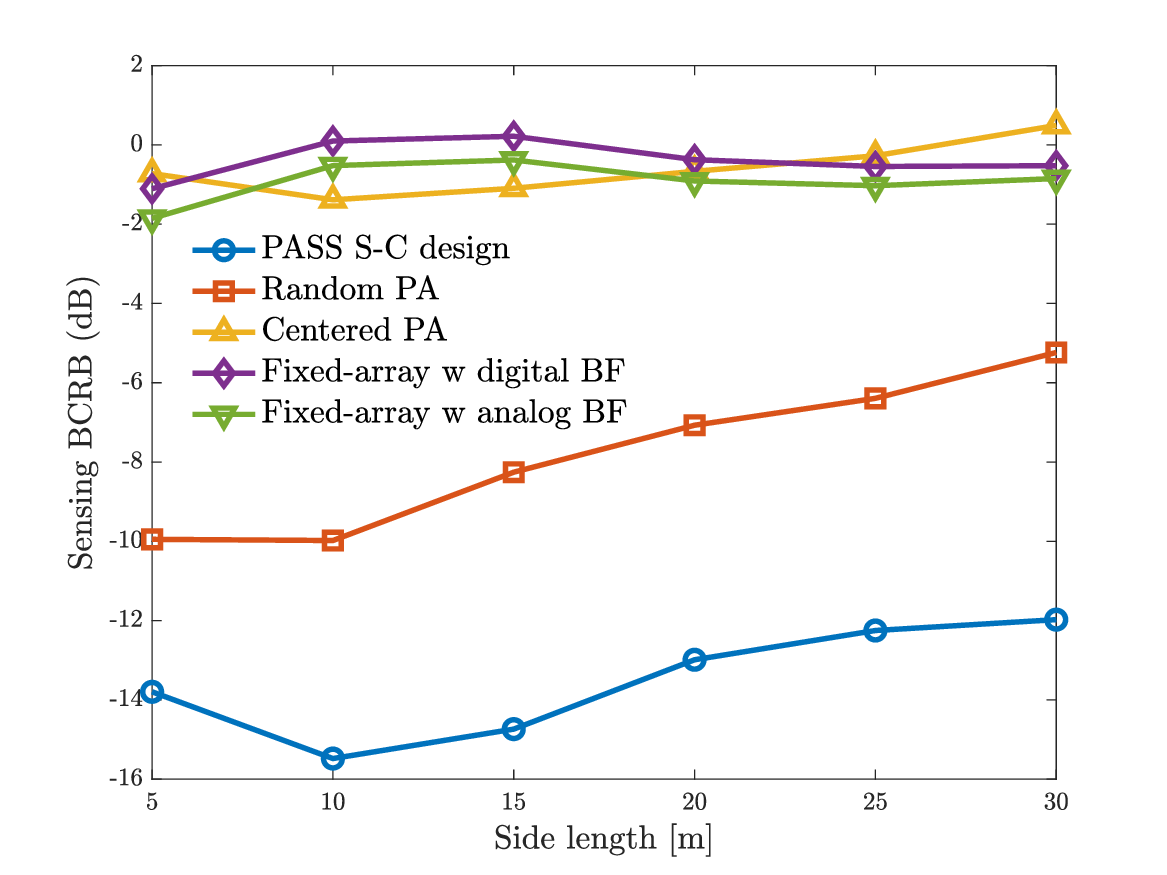}
\caption{BCRB (dB) versus the side length $D_{\rm x}$ for multi-PA case.}\label{fig:Dx_BCRB}
\vspace{-5pt}
\end{figure}
\subsubsection{Sensing Performance}
Fig.~\ref{fig:Dx_BCRB} illustrates the sensing performance in terms of the BCRB versus the service region. 
The proposed S-C design consistently achieves the superior performance across all region sizes, which demonstrates the effectiveness of the proposed element-wise AO framework. Notably, the BCRB exhibits a non-monotonic trend that initially decreases before increasing as the service region expands. This behavior can be rationalized by examining the interplay between geometric freedom and communication constraints.
In the regime of small service regions, the finite length of the waveguide restricts the lateral offset of PAs, which prevents them from attaining the optimal geometric placement for sensing. This spatial restriction leads to a suboptimal observation geometry and hence degraded sensing accuracy. 
As the deployment region expands, the PAs' layout becomes more spatially flexible, which allows more effective sensing design.
However, when the service region becomes excessively large, the multicast rate constraint becomes increasingly difficult to satisfy, as the users are more widely distributed in the development region. 
Consequently, the stricter multicast constraint necessitates a wider spatial dispersion of the PAs to ensure adequate coverage for all multicast users, which restricts the optimization of the BCRB and leads to a gradual degradation in sensing performance.
In comparison, the random PA layout performs about 4~dB worse than the proposed S-C design, as it fails to exploit the array gain. 
Nonetheless, its distributed PA positions offer a larger effective aperture compared to the fixed-array and centered PA layout benchmarks, whose compact geometries severely limit the aperture gain. 
Overall, these results highlight that the sensing advantage of PASS arises not only from its inherently large aperture but also from the array gain achieved via the dynamic reconfiguration of PA placement.

\subsubsection{Multicast Performance}
\begin{figure}[!t]
\centering
\includegraphics[height=0.26\textwidth]{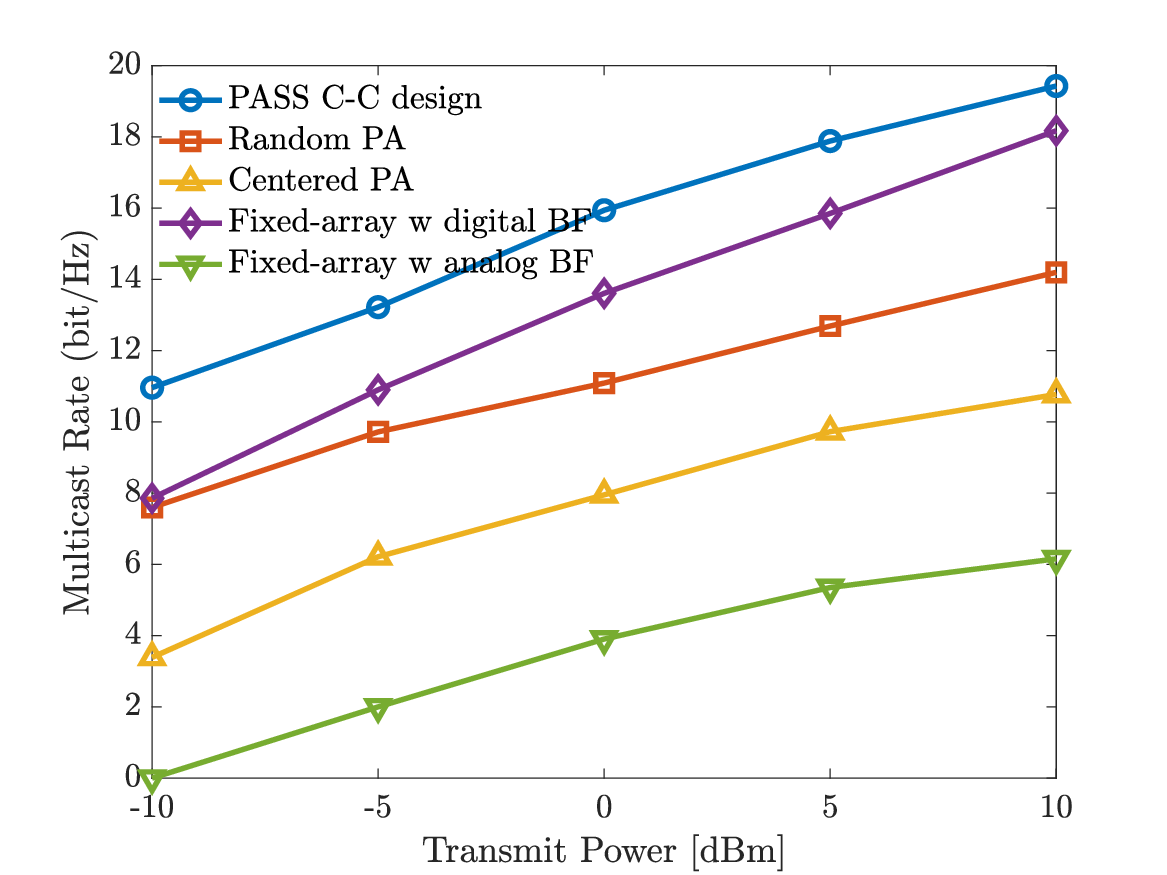}
\caption{Multicast rate versus the transmit power for multi-PA case.}\label{fig:Rate_Power}
\end{figure}
Fig.~\ref{fig:Rate_Power} shows the multicast rate versus the transmit power for multi-PA case.
The proposed C-C design achieves the highest multicast rate across all power levels, which demonstrates the effectiveness of the AL-based element-wise AO algorithm in concentrating energy toward the users while satisfying the sensing constraint. The fixed-array scheme with digital BF achieves the second-best performance, which is attributed to its high precoding flexibility. However, this performance comes at the cost of high hardware complexity and power consumption.
Notably, the random PA layout outperforms the centered PA benchmark. This advantage arises because the spatially distributed PAs are more likely to be located in the vicinity of dispersed users, thereby establishing strong LoS links and shortening communication distances. In contrast, the compact PA placement of the centered layout and fixed-array schemes restrict such proximity gains.
Consequently, the proposed PASS architecture serves as a superior solution that simultaneously achieves high spectral efficiency and maintains low hardware complexity.

\subsubsection{Pareto-Optimal Performance}
\begin{figure}[!t]
\centering
\includegraphics[height=0.26\textwidth]{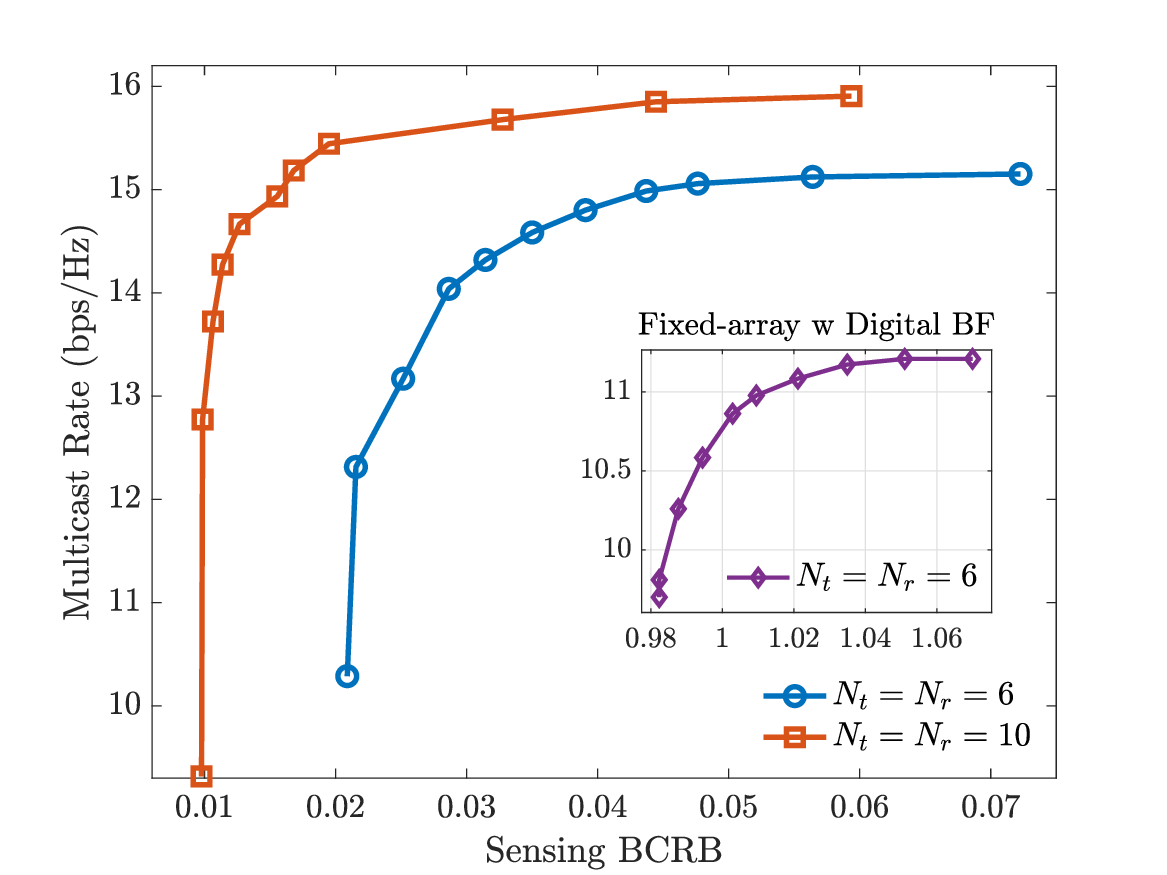}
\caption{Pareto-optimal design for multi-PA case.}
\label{fig:pareto_multiPA}
\end{figure}
Fig. \ref{fig:pareto_multiPA} illustrates the Pareto frontier of the proposed PASS architecture compared with the fixed-array digital BF benchmark. Notably, increasing the number of PAs significantly expands the achievable Pareto region. This expansion indicates that the conflict between multicast rate and sensing accuracy is progressively mitigated as the system scale grows. This gain arises from the higher spatial flexibility provided by the larger array. With more PAs available, the PASS architecture can dedicate specific subsets of PAs to satisfy the sensing resolution requirements, while simultaneously maintaining sufficient multicast rate for the users. In contrast, the digital BF benchmark exhibits a constrained tradeoff profile. Despite its high precoding flexibility, its fixed topology imposes a rigid geometric bottleneck, which limits its ability to simultaneously minimize BCRB and maximize rate compared to the geometrically adaptive PASS.

\vspace{-5pt}
\section{Conclusion}

This paper proposed a PASS-aided ISAC framework that exploits spatial flexibility to jointly support information multicasting and target localization. Within this framework, we derived the BCRB to quantify sensing accuracy using GHQ rules, and we optimized the PA placement under C-C, S-C, and Pareto-optimal criteria.
For the single-PA case, we obtained a closed-form solution for the C-C transmit PA, while we established a closed-form symmetric transceiver layout for the S-C design.
For the multi-PA scenario, we developed a sequential element-wise AO method to tackle the non-convex multivariate PA placement problems. We applied this strategy to directly solve the S-C design and further integrated it into an AL framework and a rate-profile formulation to address the C-C and Pareto-optimal designs, respectively.
Numerical results demonstrated that the proposed PASS architecture substantially outperforms conventional fixed-array in both multicasting and sensing. 
Most insightfully, our analysis showed that improving sensing accuracy relies not on power scaling, but on optimizing the geometric offset induced by the PA placement. This spatial reconfiguration capability is unique to PASS and exploits the offset to achieve superior performance without compromising the communication rate.

\appendices
\vspace{-5pt}
\section*{Appendix A: Proof of Proposition 1}
Starting from the expression of the conditional Fisher information in~\eqref{eq:Fxx_exact_compact}, we rewrite it as a product of (i) a \emph{distance-dependent} term and (ii) a bounded \emph{geometry-dependent} term:
\begin{align}\label{eq:obj_fxx}
	F_{xx}  = {2|C|^2}{\sigma_{\rm s}^{-2}} {\hat F}_{xx} {\check F}_{xx},
\end{align}
where ${\hat F}_{xx}\triangleq({R_{\rm t}^2 R_{\rm r}^2})^{-1}$ captures the path-loss decay proportional to $|\mu|^2$, while ${\check F}_{xx}\triangleq \mathcal{T}_{\rm self} + \mathcal{T}_{\rm cross}\Phi$ encapsulates the bracketed angular terms in \eqref{eq:Fxx_exact_compact}, with $\mathcal{T}_{\rm self} \triangleq \sum_{p\in\{{\rm t,r}\}}(k_0^2 + R_p^{-2})\cos^2\theta_p$ and $\mathcal{T}_{\rm cross} \triangleq 2(k_0^2 + (R_{\rm t}R_{\rm r})^{-1})\cos\theta_{\rm t}\cos\theta_{\rm r}$.

\subsubsection{Maximization of the Distance-Dependent Factor}
We first analyze the dominant term ${\hat F}_{xx}$. Let $m$ denote the fixed average lateral offset of the PAs from the target mean $\mu_x$. We parameterize the asymmetry by $\varepsilon \ge 0$, such that the individual lateral offsets are $\delta_{\rm t} = |\mu_x - x_{\rm t}| = m + \varepsilon$ and $\delta_{\rm r} = |\mu_x - x_{\rm r}| = m - \varepsilon$. The squared slant ranges are $R_{\rm t}^2 = (m+\varepsilon)^2 + \Delta_{\rm s}^2$ and $R_{\rm r}^2 = (m-\varepsilon)^2 + \Delta_{\rm s}^2$.
Expanding ${\hat F}_{xx}^{-1}$ w.r.t. $\varepsilon$ yields
\begin{align}
	{\hat F}_{xx}^{-1}(\varepsilon) &\!=\! \left[ (m^2+\Delta_{\rm s}^2+\varepsilon^2) + 2m\varepsilon \right] \left[ (m^2+\Delta_{\rm s}^2+\varepsilon^2)\! -\! 2m\varepsilon \right] \notag \\
	&\!=\! {c}_0 + 2(\Delta_{\rm s}^2 - m^2)\varepsilon^2 + \varepsilon^4,
\end{align}
where ${c}_0 = (m^2+\Delta_{\rm s}^2)^2$ is a constant. In the typical sensing region where the PA-target distance $\Delta$ exceeds the lateral PA offset $m$ (i.e., $\Delta_{\rm s} > m > 0$), the coefficient of the quadratic term is positive. Therefore, ${\hat F}_{xx}^{-1}(\varepsilon)$ is a strictly monotonically increasing function of $\varepsilon$. Consequently, the intensity factor ${\hat F}_{xx}(\varepsilon)$ achieves its unique global maximum at $\varepsilon=0$, corresponding to the symmetric configuration. 

\subsubsection{Boundedness of the Geometry-Dependent Factor}
Next, we examine the geometric factor ${\check F}_{xx}$. The Bayesian objective involves the prior expectation $\widetilde F_{xx}(x_{\rm t},x_{\rm r})=\mathbb E_{u^x}[F_{xx}(u^x,x_{\rm t},x_{\rm r})]$. In \eqref{eq:Fxx_exact_compact}, the only sign-changing term is the oscillatory factor $\Phi=\cos\!\big(k_0(R_{\rm t}-R_{\rm r})\big)$ in the cross component. When the prior of $u^x$ is non-degenerate (i.e., $\sigma_x^2>0$), the range difference $R_{\rm t}-R_{\rm r}$ varies with $u^x$, so the phase $k_0(R_{\rm t}-R_{\rm r})$ sweeps over multiple cycles in typical high-frequency regimes. As a result, the cross contribution tends to be attenuated after taking expectation due to phase averaging, i.e., $\mathbb E_{u^x}[\mathcal T_{\rm cross}\Phi]$ is much less sensitive than its pointwise value. Therefore, increasing $\widetilde F_{xx}$ is equal to maximize the path-loss factor $|\mu|^2$. Since $(R_{\rm t}^2R_{\rm r}^2)^{-1}$ is uniquely maximized at $\varepsilon=0$ for $\Delta_{\rm s}>m$, the Fisher information favors the equal-offset geometry $\varepsilon=0$, i.e., $|u^x-x_{\rm t}|=|u^x-x_{\rm r}|$. This concludes the proof.

\vspace{-15pt}
\section*{Appendix B: Proof of Proposition~2}
Let $u^x=\mu_x+\delta$, where $\delta$ follows a symmetric prior with $\mathbb E[\delta]=0$ and $\mathbb E[\delta^2]=\sigma_x^2$ i.e., $p(\delta)=p(-\delta)$. Consider the same-side equal-offset parameterization $x_{\rm t}=x_{\rm r}=c-d$ with $d\ge 0$, and define $\widetilde F_{xx}(c,d)\triangleq \mathbb E_{\delta}\!\left[F_{xx}(\mu_x+\delta,c,d)\right]$.

For a fixed $d$, shifting $c$ only changes the relative offset between the target and the PAs. Hence, there exists a smooth function $f(\cdot)$ such that
$F_{xx}(\mu_x+\delta,c,d)=f\!\left(\delta-(c-\mu_x)\right)$, which yields
\begin{align}
\partial_c \widetilde F_{xx}(c,d)
= -\,\mathbb E_{\delta}\!\left[f'\!\left(\delta-(c-\mu_x)\right)\right].
\end{align}
Evaluating at $c=\mu_x$ gives $\partial_c \widetilde F_{xx}(c,d)\big|_{c=\mu_x}=-\mathbb E_{\delta}[f'(\delta)]$.
Moreover, at $c=\mu_x$ the geometry is symmetric w.r.t. $\delta\mapsto-\delta$, implying $f(\delta)=f(-\delta)$ and thus $f'(\delta)=-f'(-\delta)$.
Using $p(\delta)=p(-\delta)$, we obtain $\mathbb E_{\delta}[f'(\delta)]=0$, and therefore
$\partial_c \widetilde F_{xx}(c,d)\big|_{c=\mu_x}=0$.
This proves that $c=\mu_x$ is a stationary point under a symmetric prior.

After fixing $c=\mu_x$, define $G(u^x,d)\triangleq \partial_d F_{xx}(u^x,d)$.
Interchanging differentiation and expectation gives
\begin{align}
\frac{{\rm d}}{{\rm d}d}\,\mathbb E_{u^x}\!\big[F_{xx}(u^x,d)\big]
= \mathbb E_{\delta}\!\big[G(\mu_x+\delta,d)\big].
\end{align}
Applying a second-order Taylor expansion of $G(\mu_x+\delta,d)$ around $\delta=0$ yields
\begin{align}
G(\mu_x+\delta,d)
& =G(\mu_x,d)+\partial_{u^x}G(\mu_x,d)\delta\notag \\
& +\frac{1}{2}\partial_{u^x u^x}G(\mu_x,d)\delta^2 +\mathcal O(|\delta|^3).
\end{align}
Taking expectation and using $\mathbb E[\delta]=0$ and $\mathbb E[\delta^2]=\sigma_x^2$ gives
\begin{align}
&\mathbb E_{\delta}\!\big[G(\mu_x+\delta,d)\big]\notag \\
& 
=G(\mu_x,d)+\frac{\sigma_x^2}{2}\partial_{u^x u^x}G(\mu_x,d)+\mathcal O(\mathbb E[|\delta|^3]).
\end{align}
For a Gaussian prior, $\mathbb E[|\delta|^3]=\Theta(\sigma_x^3)$, hence the remainder term is $\mathcal O(\sigma_x^3)$.
Substituting back $G(u^x,d)=\partial_d F_{xx}(u^x,d)$ yields
\begin{align}
\frac{{\rm d}}{{\rm d}d}\,\mathbb E_{u^x}\!\big[F_{xx}(u^x,d)\big]
= \partial_d F_{xx}(\mu_x,d)+\mathcal O(\sigma_x^2),
\end{align}
which completes the proof.

\section*{Appendix C: Proof of Remark~1}
For brevity, let $s\triangleq R^2=d^2+\Delta^2$ and $G(s)\triangleq{(1+k_0^2 s)(s-\Delta^2)}s^{-4}$.
Since ${\rm d}s/{\rm d}d=2d>0$ for $d>0$, the stationarity conditions are equivalent, i.e., $\tfrac{{\rm d}F}{{\rm d}d}=0$ or $\tfrac{{\rm d}G}{{\rm d}s}=0$. 
Moreover, the differentiate of $G$ can be given by
\begin{align}
\frac{{\rm d}G}{{\rm d}s}
&=\Big[(1+k_0^2 s)+k_0^2(s-\Delta^2)\Big]s^{-4}\notag \\
&-4(1+k_0^2 s)(s-\Delta^2)s^{-5}=0.
\end{align}
It follows that
\begin{equation}\label{eq:s-poly}
-2k_0^2 s^2-3s+3k_0^2\Delta^2 s+4\Delta^2=0.
\end{equation}

Substituting $s=d^2+\Delta^2$ into \eqref{eq:s-poly} and defining $x\triangleq d^2$ gives the univariate quadratic as follows:
\begin{equation}\label{eq:quad-x}
-2k_0^2 x^2-\big(k_0^2\Delta^2+3\big)x+\big(k_0^2\Delta^4+\Delta^2\big)=0.
\end{equation}
Solving \eqref{eq:quad-x} and retaining the nonnegative root yields
\begin{equation}\label{eq:d2-closed}
d^{2} =\frac{-\,k_0^{2}\Delta^{2}-3+\sqrt{\,9k_0^{4}\Delta^{4}+14k_0^{2}\Delta^{2}+9\,}}{4k_0^{2}}.
\end{equation}

For $k_0\Delta\to\infty$ , a binomial expansion of the square root in \eqref{eq:d2-closed} gives
\vspace{-10pt}
\begin{equation}\label{eq:asymp-d2}
d =\sqrt{\frac{\Delta^{2}}{2}-\frac{3}{4k_0^2}+{\mathcal O}(k_0^{-4})}\approx\frac{\Delta}{\sqrt 2}.
\end{equation}
Here, this proof is completed.

\bibliographystyle{IEEEtran} 
\bibliography{reference}   

\end{document}